\definecolor{mygreen}{rgb}{0.0,0.75,0.0}\usepackage{timestamp}
\appto\UrlBreaks{\do\-}
\newtheorem{theorem}{Theorem}[section]
\newtheorem{prop}{Proposition}[section]
\newtheorem{cor}{Corollary}
\newtheorem{defi}{Definition}
\def\beq{\begin{equation}}
\def\eeq{\end{equation}}
\def\beqa{\begin{eqnarray}}
\def\eeqa{\end{eqnarray}}
\def\ben{\begin{enumerate}}
\def\een{\end{enumerate}}
\def\bit{\begin{itemize}}
\def\eit{\end{itemize}}
\newcommand{\ket}[1]{|{#1}\rangle}
\newcommand{\N}[1]{\widetilde \nu}
\newcommand{\Nm}[1]{\ket{\widetilde \nu^{(m)}_{#1}}}
\newcommand{\Nf}[1]{\ket{\widetilde\nu^{(f)}_{#1}}}
\newcommand{\nuf}[1]{\ket{\nu^{(f)}_{#1}}}
\newcommand{\nuem}[1]{\ket{\nu^{(m)}_{#1}}}
  \def \upmns {{\rm{U_{PMNS}}}}
\begin{document}

\title{
General neutrino mass spectrum and mixing properties in seesaw mechanisms 
}
\author{Wojciech Flieger$^1$}
\author{Janusz Gluza$^{1,2}$}
\affiliation{
\centerline{$^1$Institute of Physics, University of Silesia, Katowice, Poland 
}
$^2$Faculty of Science, University of Hradec Kr\'alov\'e, Czech Republic
}
\date{\today}

\medskip

\numberwithin{equation}{section}

\begin{abstract}

Neutrinos stand out among elementary particles through their unusually small masses.
Various seesaw mechanisms attempt to explain this fact. 
In this work applying insights from matrix theory we are in a position to treat variants of seesaw mechanisms in a general manner.
Specifically, using Weyl's inequalities we discuss and rigorously prove under which conditions the seesaw framework leads to a mass spectrum with exactly three light neutrinos. 
We find an estimate on the mass of heavy neutrinos to be the mass obtained by neglecting light neutrinos shifted at most by the maximal strength of the coupling to the light neutrino sector.
We provide analytical conditions allowing to prescribe that precisely two out of five neutrinos are heavy.  
For higher-dimensional cases the inverse eigenvalue methods are used. 
In particular, for the CP invariant scenarios we show that if the  neutrino sector has a valid mass matrix after neglecting the light ones, i.e. the respective mass submatrix is positive definite, then large masses are provided by matrices with large elements accumulated on the diagonal. Finally, the Davis-Kahan theorem is used to show how masses affect the rotation of light neutrino eigenvectors from the standard Euclidean basis. 
This general observation concerning neutrino mixing together with results on the mass spectrum properties opens directions for further neutrino physics studies using matrix analysis. 
\end{abstract}

\keywords{neutrino, mass matrix, seesaw mechanism, matrix analysis}

\maketitle
\allowdisplaybreaks

\section{Introduction}

The Standard Model (SM) of electroweak interactions is based on the $SU(2)_L \otimes U(1)_Y$ gauge group \cite{PhysRevLett.19.1264,Glashow:1961tr,Salam:1968rm} which determines the set of the gauge boson fields.
On the other hand the gauge group alone does not imply uniquely what kind and range of elementary particles can exist in nature \cite{Weinberg:1995mt}. 
The set of matter fields presently considered to be the elementary particles is  based on a great number of experimental insights which were the result of a long-standing research program. 
It should hence be noted that the experimental observations are the deciding factor in choosing the matter content that makes up the theory of elementary particles. 
Any hypothetical signals that could not be explained by the SM, like lepton violating processes, would need modification of the matter content and interactions. That choice must be based on experimental evidence. 

As far as neutrinos are concern, which are the main theme of this work, presently three neutrinos are known of different flavours, to which correspond three charged leptons. That the three light neutrino species exist has been known since the LEP time. The central value for the effective number of light neutrinos $N_\nu$ was determined by analyzing around 20 million  $Z$-boson decays, yielding $N_\nu=2.9840 \pm 0.0082$ \cite{ALEPH:2005ab,Novikov:1999af}. It is worth mentioning that the recent reevaluation of the data \cite{Voutsinas:2019hwu,Janot:2019oyi}, including higher order QED corrections to the Bhabha process, constrain further the value of $N_\nu$, which is now $N_\nu= 2.9963 \pm 0.0074$.  The new $N_\nu$ value is much closer to 3. Including shrunk of the error it leaves less space for non-standard neutrino mixings. 
In fact, a natural extension of the SM by right-handed neutrinos leads to a theoretical prediction with $N_\nu$ less than three \cite{Jarlskog:1990kt}, assuming that there are non-zero mixings of active and sterile neutrinos, which implies non-unitarity of the matrix responsible for mixings among three known neutrino states. 
This can be seen from the general neutrino mixing setting. Let us denote a three dimensional space which describes known neutrino mass and flavor states by ${ \nuem \alpha }$ and 
${ \nuf \alpha }$, respectively. Any extra, beyond SM (BSM) mass and flavor states we denote by $\Nm j$ and $\Nf j$ for $j=1,\ldots, n_R$, respectively.
In this general scenario mixing between an extended set of neutrino mass states $\{\nuem \alpha,\Nm \beta\}$ with flavor states $\{\nuf \alpha,\Nf \beta\}$ is described by
\begin{align}
\begin{pmatrix} { \nuf \alpha} \\ 
 \Nf \beta\end{pmatrix}  &=
 \begin{pmatrix} {{ \upmns }} & V_{lh} \\  V_{hl} & V_{hh}
  \end{pmatrix} 
\begin{pmatrix} {\nuem \alpha } \\ 
\Nm \beta\end{pmatrix} 
\equiv U  \begin{pmatrix} {{ \nuem \alpha }} \\ \Nm \beta\end{pmatrix}\;.
\label{ugen}
\end{align}

The observable part of the above is the transformation from mass $\nuem \alpha, \Nm \beta$ to SM flavor $\nuf \alpha$ states and reads 
\begin{equation}
\nuf \alpha = 
\sum_{i=1}^3\underbrace{\left( \upmns  \right)_{\alpha i} {\nuem i}}_{\rm SM \;part}
+ 
\sum_{j=1}^{n_R}\underbrace{
{ \left(V_{lh}  \right)}_{\alpha j} \Nm j}
_{\rm BSM \; part}\;. \label{gen}
\end{equation}  

If $\upmns$ is not unitary 
then  there necessarily is a light-heavy neutrino "coupling" and the mixing between sectors is nontrivial $V_{lh} \neq 0 \neq V_{hl}$. 
As $U$ in (\ref{ugen}) is unitary, and we know from neutrino oscillation experiments the $\upmns{}$ matrix\footnote{Acronym comes from  authors: Pontecorvo, Maki, Nakagawa, Sakata who introduced idea of oscillations to neutrino physics \cite{Pontecorvo:1957qd,Maki:1962mu}.}
is with experimental accuracy unitary, it follows that elements of the non-diagonal matrices $V_{lh},  V_{hl}$ in (\ref{ugen}) which are responsible for mixings of known neutrinos with extra states must be very small. 

There is a natural explanation of the above structure of $\upmns$ and $V_{lh},V_{hl}$ matrices, and it comes with the celebrated seesaw mechanism which in first place explains small masses of known neutrinos. This mechanism justifies also introduction of indices "$l$" and "$h$" in \eqref{ugen} which stand for "light" and "heavy" as usually we expect extra neutrino species to be much heavier than known neutrinos.

To get physical masses, in the seesaw mechanism, the unitary matrix $U$ (\ref{ugen}) is used to diagonalize the general neutrino mass matrix

\begin{equation}\label{ssmm}
M_{SS}=
\left(
\begin{array}{cc}
M_L & M_{D} \\
M_{D}^T & M_{R} 
\end{array}
\right) ,
\end{equation}
\noindent
using a congruence transformation
\begin{equation}
U^{T}M_{SS}U \simeq diag(M_{light}, M_{heavy}).
\label{congr}
\end{equation}

The exact form and origin of the neutrino mass matrices $M_L, M_D$ and $M_R$ are not relevant now, they will be specified in the next section. In general, with the assumption that $M_L=0$, and 
\begin{equation}
\vert M_{D} \vert \ll \vert M_{R} \vert ,
\label{restss}
\end{equation}  
which means that elements of $M_D$ are much smaller than elements of $M_R$ with respect to absolute values, where $\vert \cdot \vert$ in case of matrices denotes absolute values of elements,

\begin{eqnarray}
M_{light} &\simeq &-M_{D}M_{R}^{-1}M_{D}^{T}, \label{seesaw1}\\
M_{heavy} &\simeq & M_{R}. \label{seesaw2}
\label{seesaw2}
\end{eqnarray}
A large scale of $M_R$, makes $M_{light}$ small, which represents the main idea of the seesaw mechanism.

This mechanism was proposed for the first time by Minkowski in 1977 \cite{Minkowski:1977sc}. It originates from the idea of Grand Unified Theory in which heavy neutrino mass states are present. Such neutrinos are supposed to be sterile, i.e. they are insensitive with respect to the weak interaction. In  \cite{Minkowski:1977sc} the model based on $SU(2)_{L} \times SU(2)_{R} \times U(1)$ gauge symmetry was considered with its consequences for $\mu \rightarrow e\gamma$ decays. At that time only 2 fermion families were under consideration, but both seesaw mass matrix and famous seesaw formulas, which will be presented later, were introduced. Afterwards, similar models of neutrino mass generation were discussed in 1979 \cite{Yanagida:1979as, GellMann:1980vs} and 1980 \cite{Glashow:1979nm, Barbieri:1979ag, Mohapatra:1979ia}. Authors observed that smallness of the mass of neutrinos can be explained when super heavy right-handed Majorana neutrinos are introduced, which were considered as the result of grand unification models such as $SO(10)$ theories or as the consequence of the horizontal symmetry. In both cases, the small neutrino mass appears as a consequence of the symmetry breaking. 

Nowadays there is a plethora of seesaw models.  
Ranging neutrino masses from zero to the GUT scale, mass mechanisms introduce different neutrino states \cite{Drewes:2013gca}. Apart from Dirac or Majorana types, there are pseudo-Dirac (or quasi-Dirac) \cite{Akhmedov:2014kxa}, schizophrenic \cite{Allahverdi:2010us}, or vanilla  \cite{Dev:2013vba} neutrinos, to call some of them.  
Popular seesaw mechanisms give a possibility for a dynamical explanation why known active neutrino states are so light. They appear to be of  Majorana type (recently dynamical explanation for Dirac light neutrinos has been proposed \cite{Valle:2016kyz}).
 Seesaw type-I models have been worked out in  \cite{Minkowski:1977sc,GellMann:1980vs,Yanagida:1980xy,Mohapatra:1980yp},
  type-II in \cite{Magg:1980ut},
  type-III in \cite{Foot:1988aq}. A hybrid mechanism is also possible
  \cite{Franco:2015pva}.
For inverse seesaw, see \cite{Mohapatra:1986aw,Mohapatra:1986bd}. Some of recent and interesting works on seesaw mechanisms which touch also cosmological and lepton flavor violation issues are \cite{Hernandez:2016kel,Ballesteros:2016euj,Drewes:2016jae,Okada:2017dqs,Das:2017nvm,Bhardwaj:2018lma,Branco:2019avf,Dror:2019syi,Dev:2019rxh,He:2020zns,Bolton:2019pcu}.
  
In a present work, we extend the approach defined in our previous works \cite{Bielas:2017lok,Flieger:2019eor}, where neutrino mixing matrices were considered from the point of view of matrix analysis, to the case of seesaw scenarios. In  \cite{Bielas:2017lok} we argued that singular values of mixing matrices and contractions applied to interval mixing matrices determine possible BSM effects in oscillation parameters and can be used to define the physical neutrino mixing space. 
In addition, a procedure of matrix dilation gives a possibility to find BSM extensions based on experimental data given for PMNS mixing matrices. In this way, we are closer to understand a long-standing puzzle in neutrino physics, namely if and what kind of extensions with extra neutrino states are possible, beyond the known three light neutrinos mixing picture. Using these techniques, new stringent limits for the light-heavy neutrino mixings in the  3+1 scenario (three active, light neutrinos plus one extra sterile neutrino state) have been obtained \cite{Flieger:2019eor}.  

What follows, we discuss a second part of the neutrino puzzle, focusing on the neutrino mass matrices and trying to figure out how much information the rigid structure of mass matrices characteristic for seesaw mechanisms provides about the neutrino mass spectrum.
In a similar spirit, a perturbation theory was used in \cite{Czakon:2001we} to prove that in the seesaw type I scenario we cannot get the fourth light neutrino. 
This proof was based on a standard seesaw assumption that elements of the heavy neutrino sector represented conventionally by Majorana mass matrix $M_{R}$ are much larger than elements of the Dirac mass matrix $M_{D}$, $|M_R|\gg|M_D|$. 
Besnard \cite{Besnard:2016tcs} gave an elegant proof, using the min-max theorem, that in the seesaw scenario there is a gap in the spectra. In the proof, the author assumed that the whole mass matrix is not singular and this way excluded a massless neutrino. However, current experimental data do not exclude the possibility of one massless neutrino. Also the assumption $|M_R| \gg |M_D|$ is not sufficient in general. It can be seen in the simplest way by considering $M_R$ matrix with all elements much larger than these of $M_D$, but all equal. In this case, taking, for instance, $M_R$ three dimensional, a rank of this matrix is 2, so one eigenvalue is zero. Only this example shows that a relation between matrix structures and derived eigenvalues is complicated.
Moreover, considering elements {\it of the same order} may be misleading and inaccurate. Even a simple matrix
\begin{equation}
A=
\left(
\begin{array}{cc}
100 & -95 \\
-95 & 90
\end{array}
\right)\label{eq:example_1}
\end{equation}
results in two completely different scales of eigenvalues $\lambda(A)= \lbrace 190.131, -0.131 \rbrace$. To infer eigenvalues and eigenvectors from the structured, large-dimensional matrices which pose different scales of elements is not trivial. We will examine also a connection between masses and mixings for the generic seesaw model.  

The structure of the paper is following. In the next section we will discuss different ways in which seesaw models can be realized.
In Section III  main results are obtained for the neutrino mass spectrum. In a scenario with two sterile neutrinos, analytic entrywise bounds for heavy neutrinos are presented. A higher dimensional situation is discussed using inverse eigenvalue methods for the positive definite matrices only. Also an alternative proof to \cite{Czakon:2001we} is given, showing that for the seesaw mass matrix with hierarchical block-structured submatrices there are only 3 light neutrino states. In addition, it is shown how large splits among heavy neutrino states can occur. In Section IV we discuss an angle between subspaces of the eigenvalues which connects masses and mixings.
In the last section, we conclude our work and present possible directions for further studies of the neutrino mass and mixing matrices.
The work is supported by Appendix where details on matrix theory needed for refining studies of the mass matrix structures are given. 

\section{Seesaw types of mass matrices}  

\subsection{Standard seesaw mechanisms}

%The simplest description of neutrino states is possible in a massless limit. In this case Lorentz group implies two irreducible two-dimensional representations described by the two, two-dimensional Weyl equations \cite{Waerden1928,Weyl:1929fm}. 
%Van der Waerden--Weyl spinors built up Dirac states, which are already massive and restore P- and T- symmetries \cite{Dirac:1928hu,Dirac:1928ej}. 
%Ultimately, 
Oscillation experiments established that neutrinos {\it are not massless} \cite{Fukuda:1998mi,Ahmad:2002jz}, and we already know that at least two of three known neutrinos are massive. It calls for introducing \emph{massive} right-handed neutrino states to the matter content of the theory. Then, similarly to the quark sector where right-handed quark fields are present, right-handed neutrino fields $\nu_R$ lead to the Dirac mass term in the mass Lagrangian 
\begin{equation}
    \mathcal{L}_{D}= - \bar{\nu}_{L}M_{D}\nu_{R} + h.c., 
\end{equation}
where $M_{D}$ is a complex $3 \times n_R $ matrix. 
Now, allowing for self-conjugating Majorana fields 
\begin{equation}
    \nu^{\mathcal{C}}=\mathcal{C}\bar{\nu}^{T}= \nu ,
    \label{numaj}
\end{equation}
which relates right-handed and left-handed fermionic fields, $\nu_{R}=(\nu_{L})^{\mathcal{C}},  \nu_{L}=(\nu_{R})^{\mathcal{C}}$,
another mass term 
\begin{equation}
    \mathcal{L}_{M}=-\frac{1}{2}\overline{\nu}_{L}M_{L}(\nu_{L})^{\mathcal{C}}+ h.c.
\end{equation}
can be constructed.  $M_{L}$ is a $3 \times 3$ complex symmetric matrix build exclusively from left-handed chiral fields, making a description more economic\footnote{In his seminal work \cite{Majorana:1937vz,majoranatransl}, Majorana wrote "Even though it is perhaps not yet possible to ask experiments to decide between the new theory
and a simple extension of the Dirac equations to neutral
particles, one should keep in mind that the new theory
introduces a smaller number of hypothetical entities, in
this yet unexplored field."}.

In the same way $M_R$  can be constructed with $n_R$ right-handed chiral fermionic fields. In general, the mass Lagrangian can include both Dirac and Majorana terms 
\begin{equation}
    \mathcal{L}_{D+M}= - \bar{\nu}_{L}M_{D}\nu_{R} -\frac{1}{2} \bar{\nu}_{L}M_{L}(\nu_{L})^{\mathcal{C}} 
-\frac{1}{2} \overline{(\nu_{R})^{\mathcal{C}}}M_{R}\nu_{R} + h.c. .
\end{equation}
In the Dirac-Majorana mass term resulting fields are of Majorana type. It is possible to write it in a compact form, in which it resembles the Majorana mass term 
\begin{equation}
\mathcal{L}_{D+M}=-\frac{1}{2}\bar{n}_{L}M_{D+M}n_{L}^{\mathcal{C}} + h.c., \label{eqdm}
\end{equation}
where $n_{L}=\left( \nu_{L},(\nu_{R})^{\mathcal{C}}\right)^{T}$ fullfils (\ref{numaj}).

The Dirac-Majorana mass term (\ref{eqdm}) underlies a seesaw mechanism of the neutrino mass generation which tries to explain small masses of know neutrinos by assuming large masses of sterile neutrinos. 

Here we discuss in more details what has been mentioned in Introduction.
First, we assume that left-handed Majorana mass term $\mathcal{L}_{L}$ vanishes, since it is forbidden by SM symmetries, or it may result from higher-dimensional operators \cite{Weinberg:1979sa}, which effectively damp the order of magnitude of $M_L$ elements below that of $M_D$. Secondly, we assume that Dirac neutrinos acquire masses through standard Higgs mechanism, so the elements of the Dirac neutrino mass matrix are of the order of the electro-weak scale. Last, but maybe the most important assumption concerns right-handed Majorana mass term, which is a manifestation of new physics beyond SM, and tells us that right-handed neutrinos are very heavy particles. Altogether, we assume: 
\begin{equation}
M_{L} \simeq 0, \;M_{D} \sim {\rm EW-}{\rm scale}  \ll M_R \sim {\rm GUT-}{\rm scale}
\end{equation}
where ${\rm EW-}{\rm scale}$ refers to the electroweak spontaneous symmetry scale ($10^2$ GeV) and the ${\rm GUT-}{\rm scale}$ was originally taken to be of the order of $10^{15}$ GeV.

Hence, we get the $2 \times 2$ symmetric block seesaw mass matrix 
as given in (\ref{ssmm}), which with assumption (\ref{restss}) gives neutrino mass spectrum (\ref{seesaw1}) and (\ref{seesaw2}).

Kanaya \cite{doi:10.1143/PTP.64.2278} and independently Schechter and Valle \cite{Schechter:1981cv} showed that it is possible to block diagonalize the seesaw mass matrix, up to the terms of the order $M_{R}^{-1}M_{D}$. 
In this case the mixing matrix takes the following form
\begin{small}
\begin{equation}
\left(
\begin{array}{ll}
1 - \frac{1}{2}M_{D}^{\dag}(M_{R}M_{R}^{\dag})^{-1}M_{D} & (M_{R}^{-1}M_{D})^{\dag} \\
-M_{R}^{-1}M_{D} & 1 -\frac{1}{2}M_{R}^{-1}M_{D}M_{D}^{\dag}(M_{R}^{\dag})^{-1}
\end{array}
\right).
\label{Wappr}
\end{equation}
\end{small}

The seesaw mechanism can be neatly connected with the effective theory 
\cite{Weinberg:1979sa} in which
 \begin{equation}\label{eq: 2.1.7}
 \mathcal{L}_{eff}=- \frac{1}{\Lambda} \sum_{l',l} y_{l'l}(\Psi_{l'L}^{T}\sigma_{2} \Phi)\mathcal{C}^{-1}(\Phi^{T} \sigma_{2} \Psi_{lL}) + H.c.
 \end{equation}
 and
$ \psi_{lL}^T=
 \left(
 \nu_{lL},
 l_{L}
 \right), 
 \Phi^T=
 \left(
 \Phi^{+},
 \Phi^{0}
 \right)$
 are the SM lepton  and   Higgs doublets, respectively. Coefficients $y_{l' l}$ denotes dimensionless couplings and $\Lambda$ is the energy scale in which new physics effects do not decouple. After  spontaneous electroweak symmetry breaking, $\mathcal{L}_{eff}$ takes the form 
 \begin{equation}
\mathcal{L}_{eff} \to \mathcal{L}_{L}= -\frac{1}{2} \bar{\nu}_{L}\mathcal{M}_{L}(\nu_{L})^{\mathcal{C}} + H.c.
 \end{equation}
 with
 \begin{equation}
 \mathcal{M}_{L}=\frac{yv^{2}}{\Lambda}.
 \end{equation}
 There exist many ways to extend SM which results in the effective Lagrangian (\ref{eq: 2.1.7}). On the other hand, if we assume that we complete SM by adding only one type of particle, we are constrained to three possibilities to induce a light neutrino spectrum.
 These three possibilities lead to the different realization of the seesaw mechanism:
 \begin{enumerate}
 \item Seesaw Type-I (canonical seesaw).\\
 In this case, we add to SM right-handed neutrino fields $\nu_{R}$. Thus, we get a seesaw formula discussed previously 
 \begin{equation}
\mathcal{M}_{L}\to  M_{light} \simeq -M_{D}^{T}M_{R}^{-1}M_{D}, \quad \vert M_{D} \vert \ll \vert M_{R} \vert.
 \end{equation}
 
 Here $\Lambda$ in (\ref{eq: 2.1.7}) is identified with inverse of matrix $M_R$.
 \item Seesaw Type-II \cite{Magg:1980ut} \cite{Mohapatra:1980yp}. \\
 Instead of $\nu_{R}$ we can add a scalar boson triplet  $\Delta=(\Delta^{++}, \Delta^{+}, \Delta^{0})$ to get a small neutrino masses. Here $\Lambda$ in (\ref{eq: 2.1.7}) is identified with masses of scalar boson triplets. In this variation of the seesaw model, the mass of neutrinos is given by
 \begin{equation}
 \mathcal{M}_{L}\to  M_{light} \simeq \frac{\mu v^{2}}{M_{\Delta}^{2}}, \quad \vert \mu \vert \sim \vert M_{\Delta} \vert, \ \vert v \vert \ll \vert M_{\Delta} \vert,
 \end{equation}
 where $M_{\Delta}$ corresponds to the mass of the boson triplet $\Delta$.
 
 \item Seesaw Type-III \cite{Foot:1988aq}.\\
 In the last case we complement SM with a fermion triplet $\Sigma=( \Sigma^{+}, \Sigma^{0}, \Sigma^{-})$ 
which corresponds to $\Lambda$ in (\ref{eq: 2.1.7}). 
 In Type-III mechanism we get the following formula for neutrinos masses
 \begin{equation}
 \mathcal{M}_{L}\to  M_{light} \simeq -y^{T}M_{\Sigma}^{-1}y v^{2}, \quad \vert y \vert \ll \vert M_{\Sigma} \vert,
 \end{equation}
 where $M_{\Sigma}$ corresponds to the mass of the fermion triplet $\Sigma$.
 
 \end{enumerate}
 
 \subsection{Extended seesaw mass matrices}
 
Now we focus on seesaw extensions connected with extra fermion fields, which is a wide area of studies. In most of them, besides the right-handed neutrino fields $\nu_{R}$ characteristic for the seesaw type-I model, new singlet fermion fields $S_{R}$ are added. This type of the extension of the seesaw mechanism was introduced for the first time in 1983 by Wyler and Wolfenstein \cite{Wyler:1982dd}. The corresponding general mass term takes the following form
\begin{equation}
\begin{split}
\mathcal{L}_{ESS}= - \bar{\nu}_{L}M_{D}\nu_{R} - \bar{\nu}_{L}M_{L}(\nu_{L})^{\mathcal{C}} - \overline{(\nu_{R})^{\mathcal{C}}}M_{R}\nu_{R} \\
- \bar{\nu}_{L}M S_{R} - \overline{(\nu_{R})^{\mathcal{C}}}mS_{R} - \overline{(S_{R})^{\mathcal{C}}}\mu S_{R} + h.c.
\end{split}
\end{equation}
where $M_{D}$, $M, m$  are matrices of the Dirac type and $M_{L}$, $M_{R}$, $\mu$ are matrices of the Majorana type.
This mass term can be written in a compact form, in a similar way as it was made for an ordinary seesaw 
\begin{equation}
\mathcal{L}_{ESS}= -\bar{N}_{L}M_{ESS}N_{L}^{\mathcal{C}} + h.c.
\end{equation}
with the symmetric mass matrix $M_{ESS}$ 
\begin{equation}
M_{ESS}=
\left(
\begin{array}{ccc}
M_{L} & M_{D} & M \\
M_{D}^{T} & M_{R} & m \\
M^{T} & m^{T} & \mu 
\end{array}
\right).
\end{equation}

However, the most popular extensions of the seesaw mechanism with additional singlet fields, namely an inverse seesaw (ISS) and a linear seesaw (LSS), use a less general structure of $M_{ESS}$
\begin{small}
\begin{equation}
\begin{split}
&M_{ISS}=
\left(
\begin{array}{ccc}
0 & M_{D} & 0 \\
M_{D}^{T} & 0 & m \\
0 & m^{T} & \mu 
\end{array}
\right), \quad
M_{LSS}=
\left(
\begin{array}{ccc}
0 & M_{D} & M \\
M_{D}^{T} & 0 & m \\
M^{T} & m^{T} & 0 
\end{array}
\right) \\
&\begin{rm} where: \ \ \ \end{rm} \ \mu \ll M_{D} \ll m, \quad \quad \quad \quad \quad \quad \ \  
\ M_{D} \sim M \ll m. 
\end{split}
\end{equation}
\end{small}

It is important that for both linear and inverse seesaw mechanisms, we can rearrange mass matrices in a way that they will have the same structure as \eqref{ssmm}:
\begin{eqnarray}
&&ISS: \nonumber
\\
&&\mathcal{M}_{D} =
\left(
M_{D}, 
0
\right), \quad
\mathcal{M}_{R}=
\left(
\begin{array}{cc}
0 & m \\
m^{T} & \mu 
\end{array}
\right),\\
&&\nonumber \\
&&LSS: \nonumber
\\
&&\mathcal{M}_{D} =
\left(
M_{D}, 
M
\right), \quad
\mathcal{M}_{R}=
\left(
\begin{array}{cc}
0 & m \\
m^{T} & 0 
\end{array}
\right).
\label{ISSLSS}
\end{eqnarray}
Due to this rearrangement, these models can be analysed in the same way as the canonical seesaw \eqref{ssmm}, with the same hierarchy of elements, i.e. $ \vert \mathcal{M}_{D} \vert \ll \vert \mathcal{M}_{R} \vert$.

Using relations \eqref{congr} and \eqref{Wappr}, we get the following formula for the light neutrino sector  in the $ISS$ case
\begin{equation}
M_{light}=M_{D}m^{-1}\mu(m^{-1})^{T}M_{D}^{T}.
\end{equation}
In the inverse seesaw scenario, a small neutrino mass is obtained by double suppression of the Dirac mass matrix. Firstly, it is suppressed by the matrix $\mu$ and the second source of suppression lies in the inverse of the matrix $m$, which has elements much larger than those of $M_{D}$. It implies that the order of magnitude of elements of $m$ can be smaller than corresponding elements of $M_{R}$ in the canonical seesaw. Thus, it is more plausible to detect such heavy neutrino states in high energy colliders. 

Similarly, in the $LSS$ case the light neutrinos sector is given by the succeeding formula
\begin{equation}
M_{light}=  -  M_{D}m^{-1}M - M^{T}(m^{-1})^{T}M_{D}^{T} .
\end{equation}
Here, the light neutrino sector depends linearly on the Dirac mass matrix $M_{D}$, in contrast to quadratic dependence in the ordinary seesaw mechanism. 

We can see that despite differences in the structure of linear, inverse and type-I seesaw scenarios, corresponding mass matrices can be expressed uniformly by one general matrix
\begin{equation}
\label{general_mass}
\mathcal{M}=
\left(
\begin{array}{cc}
0 & \mathcal{M}_{D} \\
\mathcal{M}_{D}^{T} & \mathcal{M}_{R}
\end{array}
\right),
\end{equation}
with
\begin{equation}
\label{hierarchy}
\vert \mathcal{M}_{D} \vert \ll \vert \mathcal{M}_{R} \vert.
\end{equation}

 The structure of the seesaw mass matrix \eqref{general_mass} with the assumption \eqref{hierarchy} is the starting point for analysis of the general properties of arising eigenvalues and eigenvectors.   

\section{The structure of the seesaw mass matrix and neutrino masses}

The matrix $\mathcal{M}$ in \eqref{general_mass} can be split into the sum of two matrices with the different scale of elements
\begin{equation}
\label{eq:split}
\begin{split}
\mathcal{M}&=
\left(
\begin{array}{cc}
0 & \mathcal{M}_{D} \\
\mathcal{M}_{D}^{T} & \mathcal{M}_{R}
\end{array}
\right)=
\left(
\begin{array}{cc}
0 & 0 \\
0 & \mathcal{M}_{R}
\end{array}
\right)+
\left(
\begin{array}{cc}
0 & \mathcal{M}_{D} \\
\mathcal{M}_{D}^{T} & 0
\end{array}
\right)\\
&\equiv
\hat{\mathcal{M}}_{R}+\hat{\mathcal{M}}_{D}
\end{split}
\end{equation}
Such a split gives us an opportunity to use the theorem from the matrix analysis \eqref{eq:B.2.1} (eigenvalues and singular values of a given matrix M will be denoted by $\lambda_{i}(M)$ and $\sigma_{i}(M)$ $i=1,2,\dots,n$, respectively, see Appendix A for relevant definitions) which connects the spectrum of the sum of two matrices with the spectrum of those matrices. However, it is true only for Hermitian matrices. Therefore, at the beginning we will consider real symmetric mass matrix which implies the conservation of the CP symmetry, see e.g. \cite{Kayser:1984ge,Bilenky:1987ty,Gluza:1995ix,delAguila:1996ex,Giunti:2007ry,Gluza:2016qqv}. In what follows,
the CP invariance will be identified with a real symmetric mass matrix. With these assumptions, we get the following result.
\begin{prop}
\label{prop1}
In the CP invariant seesaw scenario with 
%two mass scales \begin{small}$ \vert \mathcal{M}_{R} \vert \gg \vert \mathcal{M}_{D} \vert$, $\mathcal{M}_{D} \in M_{3 \times n}, \ \mathcal{M}_{R} \in M_{n \times n}$\end{small} and
\begin{small}$\lambda(\mathcal{M}_{R}) \gg \vert \mathcal{M}_{D} \vert $, $\mathcal{M}_{D} \in M_{3 \times n}, \ \mathcal{M}_{R} \in M_{n \times n}$\end{small}, exactly 3 light neutrinos are present.
\end{prop}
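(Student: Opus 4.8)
\emph{Proof proposal.} The plan is to exploit the additive splitting $\mathcal{M}=\hat{\mathcal{M}}_{R}+\hat{\mathcal{M}}_{D}$ introduced in \eqref{eq:split} together with Weyl's inequalities \eqref{eq:B.2.1}, which control how the ordered spectrum of a Hermitian matrix moves under a Hermitian perturbation. First I would read off the two reference spectra. Since $\hat{\mathcal{M}}_{R}$ is block diagonal with a $3\times3$ zero block and the block $\mathcal{M}_{R}$, its eigenvalues are exactly the $n$ numbers $\lambda_i(\mathcal{M}_{R})$ together with three zeros. The perturbation $\hat{\mathcal{M}}_{D}$ is purely off-diagonal, so its eigenvalues come in pairs $\pm\sigma_i(\mathcal{M}_{D})$ padded with zeros; in particular its spectral norm equals $\sigma_1(\mathcal{M}_{D})$, the largest singular value of $\mathcal{M}_{D}$.

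Next I would apply Weyl's two-sided bound rank by rank. For every index $i$ this gives
\[
\left|\lambda_i(\mathcal{M})-\lambda_i(\hat{\mathcal{M}}_{R})\right|\le \sigma_1(\mathcal{M}_{D}),
\]
so each eigenvalue of the full matrix sits within a radius $\sigma_1(\mathcal{M}_{D})$ of the corresponding eigenvalue of $\hat{\mathcal{M}}_{R}$. Consequently the three zero eigenvalues of $\hat{\mathcal{M}}_{R}$ produce three eigenvalues of $\mathcal{M}$ confined to the interval $[-\sigma_1(\mathcal{M}_{D}),\sigma_1(\mathcal{M}_{D})]$, the light sector, while every eigenvalue $\lambda_i(\mathcal{M}_{R})$ produces an eigenvalue of $\mathcal{M}$ trapped within $\sigma_1(\mathcal{M}_{D})$ of it.

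Finally I would convert the entrywise hypothesis $\lambda(\mathcal{M}_{R})\gg|\mathcal{M}_{D}|$ into a genuine spectral separation. Because $\sigma_1(\mathcal{M}_{D})$ is bounded by a dimensional constant times $\max_{ij}|(\mathcal{M}_{D})_{ij}|$, the assumption forces $\sigma_1(\mathcal{M}_{D})\ll\min_i|\lambda_i(\mathcal{M}_{R})|$. Hence each heavy eigenvalue lies at distance at least $\min_i|\lambda_i(\mathcal{M}_{R})|-\sigma_1(\mathcal{M}_{D})\gg\sigma_1(\mathcal{M}_{D})$ from the origin, opening a clean gap between the $n$ heavy eigenvalues and the three light ones and delivering the count ``exactly three light.''

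I expect the main obstacle to be the correct reading of the hypothesis rather than the Weyl estimate itself. The counting collapses the instant any eigenvalue of $\mathcal{M}_{R}$ is small, which is precisely the rank-deficient pathology flagged in the Introduction around the all-equal $M_R$ example; so the statement $\lambda(\mathcal{M}_{R})\gg|\mathcal{M}_{D}|$ must be interpreted as a lower bound on \emph{all} eigenvalues of $\mathcal{M}_{R}$, not merely on the magnitudes of its entries. Making this distinction explicit, and carefully justifying the passage from the entrywise bound on $\mathcal{M}_{D}$ to the spectral bound on $\sigma_1(\mathcal{M}_{D})$ through the norm-equivalence constant, is the delicate part; the spectral-shift argument is then routine.
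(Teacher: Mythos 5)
Your proposal is correct and follows essentially the same route as the paper: the same splitting $\mathcal{M}=\hat{\mathcal{M}}_{R}+\hat{\mathcal{M}}_{D}$, Weyl's inequalities in the form $\vert \lambda_i(\mathcal{M})-\lambda_i(\hat{\mathcal{M}}_{R})\vert \le \Vert \hat{\mathcal{M}}_{D}\Vert_2$, the identification of the three zero eigenvalues of $\hat{\mathcal{M}}_{R}$ as the light sector, and the hypothesis on all eigenvalues of $\mathcal{M}_{R}$ to keep the remaining $n$ eigenvalues heavy. The only cosmetic difference is that you evaluate $\rho(\hat{\mathcal{M}}_{D})=\sigma_1(\mathcal{M}_{D})$ exactly via the $\pm\sigma_i$ pairing, whereas the paper bounds it through the Frobenius norm as $\sqrt{3n}\,\vert \mathcal{M}_{D}\vert$; both yield the same conclusion.
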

\begin{proof}
In the seesaw model, we assume 
%that all absolute values of elements of the matrix $\mathcal{M}_{R}$ are much larger than absolute values of elements of the matrix $\mathcal{M}_{D}$ \eqref{hierarchy}.
two well-separated scales of elements of the mass matrix \eqref{hierarchy}.
Let us split the mass matrix $\mathcal{M}$ according to these scales into the sum \eqref{eq:split}.
Weyl's inequalities (\ref{eq:B.2.1}) can be transformed into the following inequality
\begin{equation}\label{eq:4.0.4}
\vert \lambda_{i}(\mathcal{M})- \lambda_{i}(\hat{\mathcal{M}}_{R}) \vert \leq \rho(\hat{\mathcal{M}}_{D}).
\end{equation}
The spectral radius is smaller than each matrix norm \eqref{th:B.3.1}, in particular 
\begin{equation}
\begin{split}
\rho(\hat{\mathcal{M}}_{D}) \leq \Vert \hat{\mathcal{M}}_{D} \Vert_{2} = \Vert \mathcal{M}_{D} \Vert_{2} \leq \Vert \mathcal{M}_{D} \Vert_{F}, 
\end{split}
\end{equation}
where $\Vert \ast \Vert_{2}$ is an operator norm. We used the fact that the operator norms of $\hat{\mathcal{M}}_{D}$ and $\mathcal{M}_{D}$ are equal and also the following relation between the operator and Forbenius norm $\Vert \ast \Vert_{2} \leq \Vert \ast \Vert_{F}$.
Since all elements of $\mathcal{M}_{D}$ have the same order of magnitude, the following estimation can be made
\begin{equation}
\begin{split}
&\Vert \mathcal{M}_{D} \Vert_{F}=\sqrt{\sum_{i,j}  \vert (\mathcal{M}_{D})_{ij} \vert^{2}} \leq \sqrt{3n}\vert \mathcal{M}_{D} \vert \Rightarrow \\
&\Rightarrow \rho(\hat{\mathcal{M}}_{D})\leq \sqrt{3n}\vert \mathcal{M}_{D} \vert.
\end{split}
\end{equation}
On the other hand, we know that matrix $\hat{\mathcal{M}}_{R}$ has at least three eigenvalues equal to 0. Since the eigenvalues of the Hermitian matrix can be arranged as in \eqref{eq:B.1.4}, we have 
\begin{equation}
\vert \lambda_{i}(\mathcal{M})- 0 \vert \leq \rho(\hat{\mathcal{M}}_{D}) \quad  for  \ \lambda_{i}(\hat{\mathcal{M}}_{R})=0.
\end{equation}
Hence, three eigenvalues of $\mathcal{M}$ must be smaller than $\sqrt{3n} \vert \mathcal{M}_{D} \vert $. \textbf{This means that at least three light neutrinos exist}.  \\
Now, let us assume that $\vert \lambda(\mathcal{M}_{R}) \vert \gg \vert \mathcal{M}_{D} \vert$. With this assumption we can show that all remaining eigenvalues of $\mathcal{M}$ must be large, i.e. of the order of elements of $\mathcal{M}_{R}$.
This is another conclusion from Weyl's inequalities.
(\ref{eq:4.0.4}) tells us that {\bf eigenvalues of $\mathcal{M}$ are maximally shifted by $\vert \mathcal{M}_{D} \vert $ from eigenvalues of $\hat{\mathcal{M}}_{R}$}. However, all eigenvalues of $\hat{\mathcal{M}}_{R}$ which remained are equal to eigenvalues of $\mathcal{M}_{R}$.  Therefore,
\begin{equation}
\vert \lambda(\mathcal{M}_{R}) \vert \gg \vert \mathcal{M}_{D} \vert \Rightarrow \lambda(\mathcal{M})\simeq \lambda(\mathcal{M}_{R}) \pm \vert \mathcal{M}_{D} \vert . 
\label{splitting}
\end{equation}
\textbf{Thus, in the CP invariant seesaw scenario with $\vert \lambda(\mathcal{M}_{R}) \vert \gg \vert \mathcal{M}_{D} \vert$, exactly three light neutrinos exist, and all additional states must be heavy}.
\end{proof}

As the matrix $\mathcal{M}_D$ couples left-handed and right-handed chiral fermions, (\ref{splitting}) gives an estimate on the mass of heavy neutrinos to be the mass obtained by neglecting light neutrinos shifted at most by the maximal strength of the coupling to the light neutrino sector.

The above discussion relies on eigenvalues of mass matrices.
However, eigenvalues are not good quantities for general neutrino mass scenarios with complex symmetric matrices. Such matrices can have complex eigenvalues, and moreover, they are not always diagonalizable by the unitary similarity transformation. 
Instead, singular values are useful, since due to Autonne-Takagi theorem \eqref{th:B.1.4}, we can always find a unitary transformation which will diagonalize the complex seesaw mass matrix. Thus, the above result can be generalized to the complex seesaw scenario with the usage of the analog of Weyl's inequalities for singular values.
\begin{cor}
\label{cor1}
In the seesaw scenario with 
%two mass scales \begin{small}$ \vert \mathcal{M}_{R} \vert \gg \vert \mathcal{M}_{D} \vert$, $\mathcal{M}_{D} \in M_{3 \times n}, \ \mathcal{M}_{R} \in M_{n \times n}$\end{small} and
\begin{small}$\sigma(\mathcal{M}_{R}) \gg \vert \mathcal{M}_{D} \vert $, $\mathcal{M}_{D} \in M_{3 \times n}, \ \mathcal{M}_{R} \in M_{n \times n}$\end{small}, exactly 3 light neutrinos are present.
\end{cor}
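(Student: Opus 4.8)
The plan is to run the argument of Proposition~\ref{prop1} essentially verbatim, but with singular values in place of eigenvalues and with the singular-value form of Weyl's inequalities in place of the eigenvalue form \eqref{eq:4.0.4}. The conceptual bridge that licenses this substitution is the Autonne--Takagi theorem \eqref{th:B.1.4}: for a complex symmetric $\mathcal{M}$ there is a unitary $U$ with $U^{T}\mathcal{M}U=\mathrm{diag}(\sigma_{1},\dots,\sigma_{n+3})$, so the physical neutrino masses are precisely the singular values $\sigma_{i}(\mathcal{M})$, even when $\mathcal{M}$ has complex eigenvalues or fails to be unitarily diagonalizable. This is exactly why eigenvalues must be abandoned in the general (non CP-invariant) case.

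First I would reuse the splitting \eqref{eq:split}, $\mathcal{M}=\hat{\mathcal{M}}_{R}+\hat{\mathcal{M}}_{D}$, and invoke the perturbation bound for singular values,
\begin{equation}
\vert \sigma_{i}(\mathcal{M})-\sigma_{i}(\hat{\mathcal{M}}_{R})\vert \leq \sigma_{1}(\hat{\mathcal{M}}_{D})=\Vert \hat{\mathcal{M}}_{D}\Vert_{2},
\end{equation}
the direct analogue of \eqref{eq:4.0.4}. The right-hand side is controlled exactly as in the CP-invariant proof: since $\Vert\hat{\mathcal{M}}_{D}\Vert_{2}=\Vert\mathcal{M}_{D}\Vert_{2}\leq\Vert\mathcal{M}_{D}\Vert_{F}\leq\sqrt{3n}\,\vert\mathcal{M}_{D}\vert$, every singular value is perturbed by at most $\sqrt{3n}\,\vert\mathcal{M}_{D}\vert$, so not a single line of that estimate needs to change.

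Next I would compute the singular values of $\hat{\mathcal{M}}_{R}$. Being block diagonal with a $3\times 3$ zero block and the block $\mathcal{M}_{R}$, its singular values form the union $\{0,0,0\}\cup\{\sigma_{1}(\mathcal{M}_{R}),\dots,\sigma_{n}(\mathcal{M}_{R})\}$. The three zeros force three singular values of $\mathcal{M}$ to obey $\sigma_{i}(\mathcal{M})\leq\sqrt{3n}\,\vert\mathcal{M}_{D}\vert$, giving \emph{at least} three light neutrinos; meanwhile the hypothesis $\sigma(\mathcal{M}_{R})\gg\vert\mathcal{M}_{D}\vert$ (which demands that \emph{all} $n$ singular values of $\mathcal{M}_{R}$ be large) keeps the remaining $n$ singular values close to $\sigma_{i}(\mathcal{M}_{R})$ and hence heavy. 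Applying Autonne--Takagi once more identifies these singular values with the masses, yielding exactly three light and $n$ heavy states.

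The one step that genuinely needs care---and the only real obstacle---is the index matching in the singular-value inequality. Ordering the singular values decreasingly, I must verify that the spectral gap $\sigma(\mathcal{M}_{R})\gg\vert\mathcal{M}_{D}\vert$ pins the three smallest singular values of $\hat{\mathcal{M}}_{R}$ (the zeros) against the three smallest of $\mathcal{M}$, so the "three light" and "$n$ heavy" families cannot be permuted into one another by a perturbation of size $\sqrt{3n}\,\vert\mathcal{M}_{D}\vert$. Once the gap is quantified, this separation is immediate and the remainder of the proof is identical to the CP-invariant case.
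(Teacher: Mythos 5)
Your proposal is correct and takes essentially the same route as the paper: the paper's proof likewise invokes the singular-value analogue of Weyl's inequalities, $\vert \sigma_{i}(\mathcal{M})-\sigma_{i}(\hat{\mathcal{M}}_{R})\vert \leq \sigma_{1}(\hat{\mathcal{M}}_{D})=\Vert \hat{\mathcal{M}}_{D}\Vert_{2}$, and then simply repeats the argument of Proposition~\ref{prop1}. Your additional remarks on Autonne--Takagi and on the index pairing (which is automatic under the decreasing ordering once all $\sigma_{i}(\mathcal{M}_{R})$ are large) just spell out details the paper leaves implicit.
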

\begin{proof}
For singular values, we have the analog of Weyl's inequalities
\begin{equation}
\vert \sigma_{i}(\mathcal{M}) -\sigma_{i}(\hat{\mathcal{M}}_{R})\vert \leq \sigma_{1}(\hat{\mathcal{M}}_{D}) = \Vert \hat{\mathcal{M}}_{D} \Vert_{2},
\end{equation}
\\
thus using similar arguments as it was done in the proof of Proposition \eqref{prop1}, we attain the assertion.
\end{proof}

One of the main seesaw mechanism assumptions is that beside three light neutrinos all additional should be very massive. The masses of heavy neutrinos are dominated by the spectrum of the $\mathcal{M}_{R}$ submatrix. However, as we could see from the simple example \eqref{eq:example_1} even when the elements of the matrix are of the same order, eigenvalues can fall apart. As a consequence, some eigenvalues of $M_{heavy}$ could be very small, which contradicts the seesaw assumptions. Such possibility can be easily seen from the Laguerre–Samuelson inequality for the real roots of polynomials. 
Our goal is to establish conditions under which eigenvalues of the $\mathcal{M}_{R}$ matrix are always large. In the case of one additional neutrino, the situation is trivial since $\mathcal{M}_{R}$ is represented just by one number. Moreover, in this case, the seesaw mechanism is no longer valid, since the light spectrum contains two massless neutrinos, which contradicts experimental results.
With two and more additional neutrinos the general solution to the problem is very difficult. In principle, we want to exclude the region smaller than some boundary value. In general, the $\mathcal{M}$ matrix is complex symmetric and masses correspond to the singular values.
Thus, if we are able to find a lower bound for the smallest singular value and impose it to be larger than some limit value, the problem is solved. Moreover, in the CP invariant scenario, $\mathcal{M}$ can be considered as real and symmetric, in which case masses are given by eigenvalues. It is known that for normal matrices singular values are equal to the absolute values of the eigenvalues \cite{horn_johnson_1991}. Thus, by bounding singular values we treat both cases.
In the mathematical literature, there are available different lower bounds for the smallest singular value expressed by matrix elements \cite{HONG199227,YISHENG199725,JOHNSON1998169,ROJO1999215,PIAZZA2002141,zou_2012}.  
We follow \cite{YISHENG199725,PIAZZA2002141} where
\begin{equation}
\sigma_{n}(A) \geq \vert detA \vert \left( \frac{n-1}{\Vert A \Vert_{F}^{2}}  \right)^{\frac{n-1}{2}} \geq X,
\end{equation}
and $n$ is the dimension of the matrix. $X$ is some boundary value and $A$ is a matrix with elements $a_{11},a_{12},...,a_{nn}$. This nontrivial inequality can be solved analytically in two dimensions which corresponds to the scenario with two additional neutrinos, using e.g. \texttt{Mathematica} \cite{Mathematica}. This scenario known as the minimal seesaw is currently intensively studied \cite{Frampton:2002qc, Guo:2006qa,Zhang:2009ac,Yang:2011fh, Antusch:2015mia,Chianese:2018dsz,Chianese:2019epo}. The analytic formulas restrict matrix elements to singular values larger than some positive number $X$, which in case of eigenvalues correspond to the region outside the interval $(-X,X)$.  
Using the abbreviations:
\begin{widetext}
\makeatletter
    \def\tagform@#1{\maketag@@@{\normalsize(#1)\@@italiccorr}}
\makeatother
\begin{eqnarray}
Y_1&=&\frac{a_{11} a_{12}^2}{a_{11}^2-X^2}, \hspace*{4cm} Y_2=\sqrt{\frac{a_{11}^4 X^2+2 a_{11}^2 a_{12}^2 X^2-a_{11}^2 X^4+a_{12}^4 X^2-2 a_{12}^2 X^4}{\left(a_{11}^2-X^2\right)^2}},\\ Y_3&=&\frac{-a_{11}^2 X^2+a_{12}^4-2 a_{12}^2 X^2}{2 a_{11} a_{12}^2},
\hspace*{1.7cm} Y_4=\sqrt{\sqrt{X^4-a_{11}^2 X^2}-a_{11}^2+X^2},
\end{eqnarray}

we get:

\begin{equation}
\label{heavy_form}
\begin{split}
a_{12},a_{22} \in \mathbb{R} \land & \Bigg\{
\bigg( a_{11}>X\land X>0\land a_{22}\geq Y_1+Y_2\bigg)\lor 
\bigg( a_{11}>X\land X>0\land a_{22}\leq Y_1-Y_2\bigg)\lor \\
%enter
&\bigg( X>0\land a_{22}\geq Y_1+Y_2\land a_{11}<-X\bigg)\lor 
\bigg(X>0\land a_{11}<-X\land a_{22}\leq Y_1-Y_2\bigg)\lor \\
%enter
&\bigg( -X=a_{11}\land a_{12}>0\land X>0\land a_{22}\geq  Y_3 \bigg)\lor 
\bigg( -X=a_{11}\land X>0\land a_{22}\geq Y_3 \land a_{12}<0\bigg)\lor \\
%enter
&\bigg( X=a_{11}\land a_{12}>0\land X>0\land a_{22}\leq  Y_3 \bigg)\lor 
\bigg( X=a_{11}\land X>0\land a_{12}<0\land a_{22}\leq  Y_3 \bigg)\lor \\
%enter
&\bigg(-Y_4=a_{12}\land Y_1-Y_2=a_{22}\land X>0\land a_{11}<X\land -X<a_{11} \bigg)\lor \\ 
%enter
&\bigg( Y_4=a_{12}\land Y_1-Y_2=a_{22}\land X>0\land a_{11}<X\land -X<a_{11} \bigg)\lor \\
%enter
&\bigg( a_{12}>Y_4\land X>0\land a_{11}<X\land -X<a_{11}\land a_{22}\leq Y_1+Y_2\land Y_1-Y_2\leq a_{22}\bigg)\lor \\
%enter
&\bigg( X>0\land a_{11}<X\land a_{12}<-Y_4\land -X<a_{11}\land a_{22}\leq Y_1+Y_2\land Y_1-Y_2\leq a_{22}\bigg)
%enter
\bigg\},
\end{split}
\end{equation}
\end{widetext}
To approach higher-dimensional cases we use the inverse eigenvalue problem, i.e. reconstruction of matrices from the spectrum \cite{Chu_1998, chu_golub_2002}. Such reconstruction for Hermitian matrices, i.e. in CP invariant scenario, is controlled by the Schur-Horn theorem \cite{schur_1923,horn_1954} 

\begin{theorem}{(Schur-Horn)}
Let $\lbrace \lambda_{i} \rbrace_{i=1}^{n}$ and $\lbrace d_{i} \rbrace_{i=1}^{n}$ 
be vectors in $\mathbb{R}^{n}$ with entries in non-increasing order. There is a Hermitian matrix with diagonal entries $\lbrace d_{i} \rbrace_{i=1}^{n}$ and eigenvalues $\lbrace \lambda_{i} \rbrace_{i=1}^{n}$ if and only if
\begin{equation}\label{eq:major}
\begin{split}
&\sum_{i=1}^{k} d_{i} \leq \sum_{i=1}^{k} \lambda_{i} \quad k=1,...,n \\
&\text{and} \\
&\sum_{i=1}^{n} d_{i} = \sum_{i=1}^{n} \lambda_{i}.
\end{split}
\end{equation}
\end{theorem}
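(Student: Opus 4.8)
The statement is an equivalence, so the plan is to treat the two implications separately: necessity (the diagonal of a Hermitian matrix is majorized by its eigenvalues, due to Schur) and sufficiency (every majorized diagonal is realizable, due to Horn).

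For necessity I would diagonalize $A = U\,\mathrm{diag}(\lambda_1,\dots,\lambda_n)\,U^{*}$ with $U$ unitary and read off $d_i = \sum_{j} |U_{ij}|^{2}\lambda_j$. Setting $S_{ij} = |U_{ij}|^{2}$, unitarity of $U$ makes $S$ doubly stochastic, so $d = S\lambda$; the equality $\sum_i d_i = \sum_i \lambda_i$ is just $\mathrm{tr}\,A$ computed two ways. For the partial sums I would invoke the variational (Ky Fan) characterization $\sum_{i=1}^{k}\lambda_i = \max_{\dim V = k}\mathrm{tr}(P_V A P_V)$, the same min--max circle of ideas underlying Weyl's inequalities used above, and take $V = \mathrm{span}(e_1,\dots,e_k)$ to obtain $\sum_{i=1}^{k} d_i \le \sum_{i=1}^{k}\lambda_i$. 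This yields the majorization \eqref{eq:major}.

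Sufficiency is the substantive direction, and I would argue by induction on $n$, the base case $n=1$ forcing $d_1 = \lambda_1$. In the inductive step I peel off the smallest diagonal entry $d_n$: I look for an $(n-1)$-vector $\mu$ that interlaces $\lambda$, i.e. $\lambda_1 \ge \mu_1 \ge \lambda_2 \ge \cdots \ge \mu_{n-1} \ge \lambda_n$, satisfies the trace constraint $\sum_i \mu_i = \sum_i \lambda_i - d_n$, and majorizes $(d_1,\dots,d_{n-1})$. Given such a $\mu$, the induction hypothesis supplies an $(n-1)\times(n-1)$ Hermitian block $B$ with spectrum $\mu$ and diagonal $(d_1,\dots,d_{n-1})$, and a standard bordering construction (solving the secular equation for the last row and column, whose entries have nonnegative squared moduli precisely because $\mu$ interlaces $\lambda$) extends $B$ to an $n\times n$ Hermitian matrix with spectrum $\lambda$ and last diagonal entry $d_n$. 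Reassembling yields the required matrix.

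The main obstacle is producing the intermediate vector $\mu$ meeting the interlacing, trace, and majorization conditions at once; this is the classical crux of the argument. I would construct $\mu$ by pushing its entries as close to $\lambda$ as the trace constraint allows and then checking that the majorization slack present in $d \prec \lambda$ is inherited by $(d_1,\dots,d_{n-1}) \prec \mu$. An alternative that avoids the interlacing bookkeeping is the T-transform (Robin--Hood) route: $d \prec \lambda$ holds iff $d$ arises from $\lambda$ by finitely many elementary transfers that move two coordinates toward their common mean, and each transfer is realized by a plane (Givens) rotation conjugating the current Hermitian matrix, preserving the spectrum while steering a chosen pair of diagonal entries. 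Starting from $\mathrm{diag}(\lambda)$ and composing these rotations reaches a matrix with diagonal $d$. The delicate point in that route is that off-diagonal entries created by earlier rotations shift the range of diagonal values a later rotation can hit, so I would order the transfers carefully so that every prescribed diagonal target remains reachable at the step where it is fixed.
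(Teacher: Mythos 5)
Your proposal cannot be checked against the paper's own argument, because the paper does not prove this theorem at all: Schur--Horn is stated as an imported result, cited to Schur (1923) and Horn (1954), and the appendix explicitly defers every proof to the matrix-analysis literature. So your proof stands on its own, and as an outline it is the correct classical one. The necessity half is essentially complete as written: Ky Fan's variational principle $\sum_{i=1}^{k}\lambda_i=\max_{\dim V=k}\operatorname{tr}(P_V A P_V)$, applied to coordinate subspaces spanned by the $e_i$ carrying the $k$ largest diagonal entries, gives the partial-sum inequalities, and the trace gives equality at $k=n$; note that your doubly stochastic observation is then redundant, since $d=S\lambda$ with $S$ doubly stochastic is by itself equivalent to $d\prec\lambda$ (Hardy--Littlewood--P\'olya), so either ingredient alone suffices. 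The sufficiency half is the substantive direction, and you correctly name both standard routes together with their cruxes, but the two routes are not in equally good shape. In the bordering route, the existence of an intermediate vector $\mu$ that simultaneously interlaces $\lambda$, has the right trace, and majorizes $(d_1,\dots,d_{n-1})$ is a genuine lemma requiring a careful case-by-case construction; ``pushing entries as close to $\lambda$ as the trace allows'' does not yet constitute a proof, so that branch has a real gap. Your Givens-rotation route, however, closes completely once the composition of rotations is replaced by a recursion: pick the largest $k$ with $\lambda_k\ge d_1$, use one plane rotation to build a $2\times 2$ real symmetric block with eigenvalues $(\lambda_k,\lambda_{k+1})$ and diagonal $(d_1,\mu)$, where $\mu=\lambda_k+\lambda_{k+1}-d_1$; verify the elementary majorization $(d_2,\dots,d_n)\prec(\lambda_1,\dots,\lambda_{k-1},\mu,\lambda_{k+2},\dots,\lambda_n)$; obtain by induction a Hermitian $B=Q\Lambda'' Q^{\ast}$ with that spectrum and diagonal $(d_2,\dots,d_n)$; and conjugate the assembled $n\times n$ matrix by $1\oplus Q$. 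This conjugation leaves the $(1,1)$ entry $d_1$ untouched and maps the complementary block exactly onto $B$, so the off-diagonal entries created by the first rotation are merely carried along and never shift a diagonal entry that has already been fixed --- which is precisely the ``careful ordering'' you were hoping exists.
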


The construction of matrices based on this theorem can be realized by different approaches \cite{FRIEDLAND1979412,CHAN1983562,chu_1995, Zha1995,fickus_2011}.
An important feature of this theorem is the majorization condition between eigenvalues and diagonal elements \eqref{eq:major}. In the case of large eigenvalues, this relation implies that diagonal elements also must be large in comparison to off-diagonal elements. However, this works only if the $\mathcal{M}_{R}$ matrix is non-negative definite, i.e. all eigenvalues are non-negative. Such a situation in the case of three sterile neutrinos is presented in (Fig.~\ref{fig1}). In a scenario with some eigenvalues large but negative, the relation \eqref{eq:major} does not restrict matrix elements.  

To overpass the requirement of non-negative definiteness and CP conservation we can invoke singular values once again. As in the eigenvalue case, singular values can also be used to reconstruct a matrix via a procedure known as inverse singular value problem \cite{Chu:2000,SHENGJHIH2013}. However, currently on our disposal we have only theorems which connect eigenvalues and singular values (Weyl-Horn theorem \cite{Weyl408,Horn1954OnTE}) or singular values and diagonal elements (Sing-Thompson theorem \cite{sing_1976,thompson_1977}). Thus, we miss symmetry of the matrix and further work is needed to combine all these components.

\begin{figure}[h!]
\begin{center}
\includegraphics[scale=0.4]{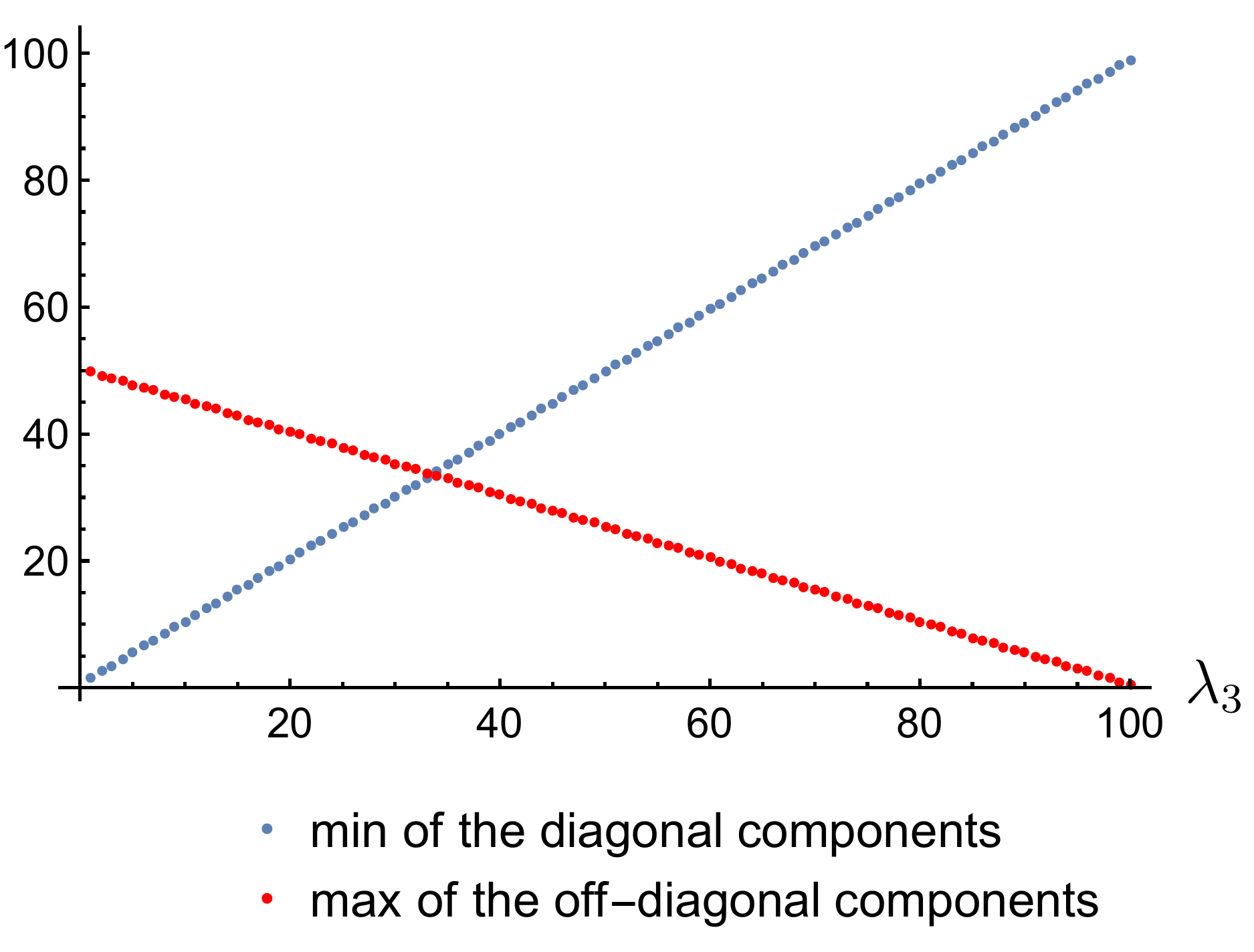}
\end{center}
\caption{An illustration of the Schur-Horn theorem for the non-negative definite matrix. Two eigenvalues have been set up to $\lambda_{1}=101$ and $\lambda_{2}=100$, and the third eigenvalue $\lambda_{3}$ ranges from 0 to 100 to see the behavior of the diagonal elements and off-diagonal elements for a different spread between eigenvalues. When the spread is large, i.e. $\lambda_{3} \sim 0$, the diagonal elements can be very small and off-diagonal elements can take significant values. On the other hand, if all eigenvalues ale large then diagonal elements dominate.}
\label{fig1}
\end{figure}

\section{The separation between eigenspaces in the seesaw scenario}

We are interested how masses and mixings are connected to each other in the seesaw scenario\footnote
{Recently, an interesting relation has been found between eigenvectors and eigenvalues for neutrino oscillations   in \cite{Denton:2019ovn, Denton:2019pka}.}. To answer this we will study the behavior of the eigenspace of the matrix $\hat{\mathcal{M}}_{R}$ under the perturbation $\hat{\mathcal{M}}_{D}$ \eqref{eq:split}. Thus, we are interested in the estimation of the difference between eigenspaces spanned by eigenvectors of $\hat{\mathcal{M}}_{R}$ and $\mathcal{M}$ \eqref{eq:split}. 
As a starting point let us consider the eigenproblem for the matrix $\hat{\mathcal{M}}_{R}$. For block-diagonal matrices eigenvalues corresponds to the eigenvalues of its diagonal blocks. In this case, one of these blocks is a zero matrix. Thus, this block has a threefold eigenvalue 0 and the corresponding eigenvectors are
\begin{eqnarray}
&& (1,0,0,0,0,0,...,0)^{T},
(0,1,0,0,0,0,...,0)^{T}, \nonumber \\
&& (0,0,1,0,0,0,...,0)^{T}.
\end{eqnarray} 
They span a standard 3-dimensional Euclidean space embedded in a (3+n)-dimensional space. The rest of the eigenvalues of $\hat{\mathcal{M}}_{R}$ correspond to those of the $\mathcal{M}_{R}$ submatrix.
Our approach will be based on the Davis-Kahan theorem \eqref{DK} which is valid for the CP-conserving case (a generalization to the CP-violating case seems to be possible \cite{IPSEN2000131}, however it requires a separate study). It allows us to estimate the sine of the angle between subspaces, denoted as $\sin\Theta$, spanned by the eigenvectors. Since the eigenspace spanned by the zero eigenvalues of $\hat{\mathcal{M}}_{R}$ has a very simple structure we will focus on the estimation of the angle between spaces corresponding to light neutrinos. Information about the other pair of subspaces follows immediately from the orthogonality of the mixing matrix. Let us denote the eigenspaces spanned by the eigenvectors corresponding to small eigenvalues by $V_{L}$ and $V_{L}^{'}$, respectively for $\hat{\mathcal{M}}_{R}$ and $\mathcal{M}$. Then in the seesaw scenario (see Fig.~\ref{fig:DK}) we have 
\begin{equation}\label{7.2}
\Vert \sin\Theta(V_{L},V_{L}^{'}) \Vert \leq \frac{1}{\delta} \Vert \mathcal{M} - \hat{\mathcal{M}}_{R} \Vert =\frac{1}{\delta} \Vert \mathcal{M}_{D} \Vert,
\end{equation}
where $\delta$ is the distance between the largest of the light masses and the smallest of the heavy masses.
The above inequality says that $\sin\Theta(V_{L},V_{L}^{'})$ can be estimated using a gap between spectra and the size of the perturbation. It is clear that if the subspaces $V_{L}$ and $V_{L}^{'}$ are close to each other then the sine between them will tend to zero.
Therefore, from \eqref{7.2} we can draw the following conclusions:
\begin{itemize}
\item If the separation between light and heavy neutrinos is pronounced like in the seesaw case, then the subspace spanned by light neutrinos is almost parallel to the 3-dimensional Euclidean space. However, when these two spectra approach each other not much information can be retrieved from \eqref{DK}. 
\item Even if the $\delta$ is not that large, these two subspaces still can be almost parallel when $\mathcal{M}_{D}$ is very small.
\end{itemize}

\begin{figure}[h!]
\includegraphics[scale=0.4]{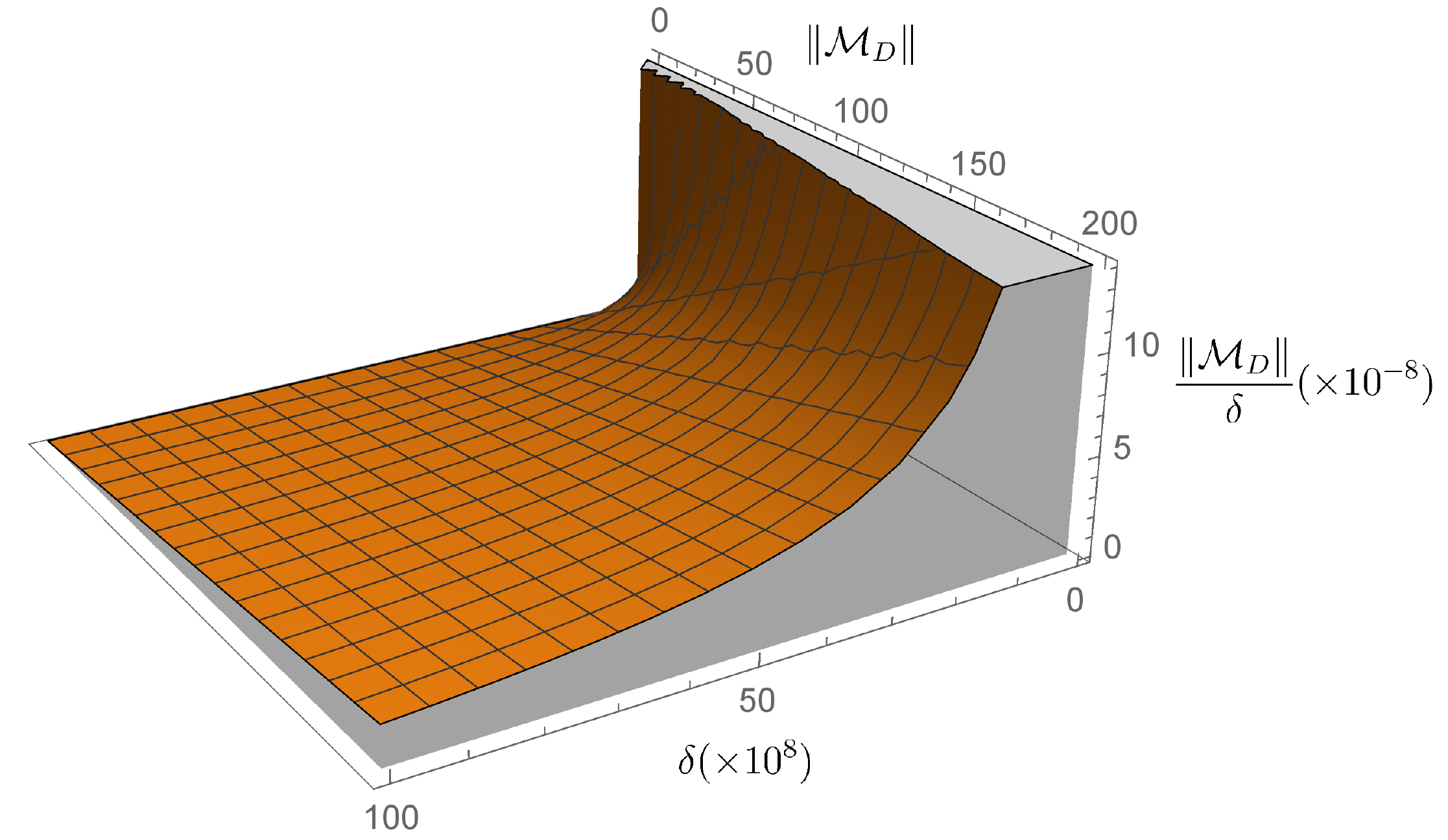}
\caption{The behavior of the $\sin \Theta$, controlled by the Davis-Kahan theorem, in the seesaw scenario. The region below the graph represents allowed values. In this case, $\sin \Theta$ is bounded by a function depending on the norm of $\mathcal{M}_{D}$ and the gap between the spectrum. As heavy neutrinos become lighter and lighter the blowout of the bound is observed.}
\label{fig:DK}
\end{figure}

\section{Summary and outlook}

Simple ideas are very often the most powerful, and this is the case for the seesaw mechanism which provides an attractive way to explain the smallness of the light neutrino masses by introducing very massive sterile neutrino states. This manifests in a specific structure of the mass and mixing matrices. We treat uniformly various seesaw types of mass matrices, including linear and inverse extensions, using the same, general and rigid-block mass matrix structure. We proved that under the general sub-matrix mass hierarchies \eqref{hierarchy} exactly three light neutrinos emerge (Prop.~\ref{prop1} and Cor.~\ref{cor1}). Moreover, as a consequence we derived the allowed splitting for heavy neutrinos in terms of submatrices $\mathcal{M}_{D}$ and $\mathcal{M}_{R}$ \eqref{splitting}. As the spectrum of $\mathcal{M}_{R}$ dominates the contribution to heavy masses, we have investigated the structure of this matrix to ensure spectrum large. In a minimal seesaw scenario with two sterile neutrinos, we gave analytic bounds for heavy neutrino masses expressed by the matrix elements \eqref{heavy_form}. For cases with a larger number of additional neutrinos the inverse eigenvalue problem has been applied, however this can be done systematically only in CP invariant case and for positive definite matrices. The general solution for any dimension still requires more study. Especially, the inverse singular value methods could be useful. This requires connection of currently available theorems with the specific structure of the seesaw mass matrix. 
Lastly, we studied the behavior of the angle between subspaces spanned by the eigenvectors which connects masses with mixings. In this case the Davis-Kahan theorem applied to the seesaw mechanism gives a simple estimation of the angle between mixing spaces depending on the norm of the Dirac mass matrix. 

Our work is based on matrix theory which is a vast and rich field. We would like to outline a few potential directions related to neutrino physics for further studies:
\begin{itemize}
\item Gershgorin circles provide alternative inclusive entrywise bounds for eigenvalues. It can be applied to models with diagonally dominant mass matrix to get insight into the mass spectrum.
\item Symmetric gauge functions are strictly connected to the unitary invariant norms.  We use unitary invariant norms in our study of the mixing matrices \cite{Bielas:2017lok,Flieger:2019nsb,Flieger:2019eor}. The symmetric gauge functions can provide a new perspective into the mixing analysis. 
\item The characteristic polynomial with real roots discussed in this work is a particular example of hyperbolic polynomials. This gives the opportunity to study eigenvalue problems from a more general point of view.
\item Semidefinite programming (SDP) does not come directly from matrix theory. However, this part of the mathematical programming is based on the positive-definite matrices. SDP can be used to better understand the region of physically admissible mixing matrices \cite{Bielas:2017lok, Saunderson_2015}.
\end{itemize}

\section*{Acknowlegments}
We would like to thank Krzysztof Bielas and Marek Gluza for useful remarks. The work was supported partly by the Polish National Science Centre (NCN) under the Grant Agreement 2017/25/B/ST2/01987 and the COST (European Cooperation in Science and Technology) Action CA16201 PARTICLEFACE.

\section*{Appendix: \\Matrix theory insight to the seesaw mass matrix studies}

In this appendix we introduce definitions and theorems used in the main text. Proofs for presented here statements can be found in \cite{horn_johnson_2012, Meyer:2000:MAA:343374, rao}.

\subsection{Matrix norms}

Let us begin with consideration the matrix "size" problem. 
A set of all matrices of a given dimension along with matrix addition and matrix multiplication creates a vector space. Thus, it is natural to consider a size of vectors or a distance between two points of this space. This can be done by introducing a function called the norm.
\begin{defi}
A norm for a real or complex vector space $V$ is a function $\Vert \cdot \Vert$ mapping $V$ into $\mathbb{R}$ that satisfies the following conditions
\begin{equation}\label{vnorm}
\begin{split}
&\Vert A \Vert \geq 0 \ and \ \Vert A \Vert=0 \Leftrightarrow A=0, \\
&\Vert \alpha A \Vert = \vert \alpha \vert \Vert A \Vert, \\
&\Vert A+B \Vert \leq \Vert A \Vert + \Vert B \Vert. \\
\end{split}
\end{equation}
\end{defi}
The same is true for the matrix space, however, for matrices, this can be done in two ways. We can use either standard vector norm \eqref{vnorm} or introduce more adequate so-called matrix norm which takes into account specific matrix multiplication.  
\begin{defi}\label{mnorm}
A matrix norm is a function $\Vert \cdot \Vert$ from the set of all complex matrices into $\mathbb{R}$ that satisfies the following properties
\begin{equation}
\begin{split}
&\Vert A \Vert \geq 0 \ and \ \Vert A \Vert=0 \Leftrightarrow A=0, \\
&\Vert \alpha A \Vert = \vert \alpha \vert \Vert A \Vert, \\
&\Vert A+B \Vert \leq \Vert A \Vert + \Vert B \Vert, \\
&\Vert A B \Vert \leq \Vert A \Vert \Vert B \Vert.
\end{split}
\end{equation}
\end{defi}  
It is important to emphasize that usual vector norms \eqref{vnorm} and matrix norms \eqref{mnorm} are strictly connected: Any vector norm can be translated into a matrix norm in the following way
\begin{equation}
\Vert A \Vert_{\star} = \max_{\Vert x \Vert_{\star} = 1}\Vert Ax \Vert_{\star},
\end{equation}
where $\Vert \cdot \Vert_{\star}$ stands for a corresponding vector norm. The matrix norm defined in this way ensure submultiplicativity condition and it is called the induced matrix norm. The most popular matrix norms are:
\begin{itemize}
\item Spectral norm: $\Vert A \Vert  = \max_{\Vert x \Vert_{2} = 1} \Vert Ax \Vert_{2}=\sigma_{1}(A)$.

\item Frobenius norm: $\Vert A \Vert_{F} =\sqrt{Tr(A^{\dag}A)}=\sqrt{\sum_{i,j=1}^{n}\vert a_{ij} \vert^{2}}=\sqrt{\sum_{i=1}^{n}\sigma_{i}^{2}(A)}$.

\item Maximum absolute column sum norm: $\Vert A \Vert_{1} = \max_{\Vert x \Vert_{1} = 1} \Vert Ax \Vert_{1}=\max_{j} \sum_{i}\vert a_{ij} \vert $.

\item Maximum absolute row sum norm: $\Vert A \Vert_{\infty} = \max_{\Vert x \Vert_{\infty} = 1} \Vert Ax \Vert_{\infty}=\max_{i} \sum_{j}\vert a_{ij} \vert $.
\end{itemize}

\subsection{Eigenvalues and singular values}

Neutrinos with definite masses are obtained through a unitary transformation which brings the mass matrix into diagonal form. In a general seesaw scenario where diagonalization is done by the congruence transformation \eqref{congr}, masses are given by singular values. However, if we restrict attention to the CP invariant case, diagonalization goes through the similarity transformation, and the quantities corresponding to neutrino masses are eigenvalues. We will present theorem concerning both of these quantities, starting with the notion of a spectral radius.

\begin{defi}
Let $A \in M_{n}$. The spectral radius of A is $\rho(A)= \max \lbrace \vert \lambda \vert : \ \lambda \in \sigma(A) \rbrace $.
\end{defi}
All matrix norms and spectral radius are connected by the following theorem. 
\begin{theorem}
Let A be an $n \times n$ matrix, then for any matrix norm $\Vert \cdot \Vert$ the following statement is true
\begin{equation}
\rho(A)\leq \Vert A \Vert
\label{th:B.3.1}
\end{equation}
\end{theorem}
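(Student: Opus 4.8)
The plan is to exploit submultiplicativity, the fourth axiom that distinguishes a genuine matrix norm (Definition~\ref{mnorm}) from a mere vector norm applied to the entries of a matrix. First I would select an eigenvalue $\lambda$ of $A$ whose modulus attains the spectral radius, $\vert \lambda \vert = \rho(A)$, together with an associated nonzero eigenvector $x$, so that $Ax = \lambda x$. The subtle point here is that an arbitrary matrix norm need not be induced by any vector norm, so one is not entitled to write $\Vert Ax \Vert$ for a vector argument $x$; the norm is defined on matrices, not on vectors. To get around this I would repackage the eigenvector as a square matrix.

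Concretely, I would form the $n \times n$ matrix $X$ whose every column equals $x$. Since $x \neq 0$ the matrix $X$ is nonzero, hence $\Vert X \Vert > 0$ by the definiteness axiom. Computing the product column by column gives $AX = \lambda X$, because each column of $AX$ is $Ax = \lambda x$. In this way the eigenvalue equation is promoted from a statement about vectors to a statement about matrices, which is exactly the form to which the matrix norm can be applied.

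Taking the matrix norm of both sides and using homogeneity yields $\Vert AX \Vert = \vert \lambda \vert \, \Vert X \Vert$, while submultiplicativity gives $\Vert AX \Vert \leq \Vert A \Vert \, \Vert X \Vert$. Combining these two relations produces $\vert \lambda \vert \, \Vert X \Vert \leq \Vert A \Vert \, \Vert X \Vert$, and dividing through by the positive number $\Vert X \Vert$ leaves $\vert \lambda \vert \leq \Vert A \Vert$. Since $\lambda$ was chosen to realise the largest modulus among the eigenvalues, this is precisely the desired conclusion $\rho(A) \leq \Vert A \Vert$.

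I expect the only genuinely delicate step to be the very first one, namely recognising that submultiplicativity, rather than any connection to an underlying vector norm, is the essential ingredient. The remaining manipulations are routine. This is exactly why the construction of the constant-column matrix $X$ is needed: it converts the eigenpair into a matrix identity on which the submultiplicative inequality can act, sidestepping the fact that $\Vert \cdot \Vert$ may have no meaning on vectors.
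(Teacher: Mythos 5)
Your proof is correct and is essentially the same argument the paper relies on: the paper states this theorem without proof, deferring to its cited references (Horn--Johnson and others), and the constant-column eigenvector matrix $X$ with $AX=\lambda X$, combined with homogeneity and submultiplicativity, is precisely the classical proof found there. Your identification of the key subtlety---that a general matrix norm need not act on vectors, so the eigenpair must be promoted to a matrix identity---is exactly right, and every step (existence of $\lambda$ with $\vert\lambda\vert=\rho(A)$, $\Vert X\Vert>0$ by definiteness, division by $\Vert X\Vert$) is sound.
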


\begin{theorem}\label{th:B.1.1}
Let $A \in M_{n}$ be Hermitian. Then the eigenvalues of A are real.
\end{theorem}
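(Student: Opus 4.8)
The plan is to use the standard quadratic-form (Rayleigh quotient) argument: attach to an arbitrary eigenvalue a nonzero eigenvector and show that Hermiticity forces the associated scalar to be real. First I would fix an eigenvalue $\lambda \in \sigma(A)$ together with an eigenvector $x \neq 0$, so that $Ax = \lambda x$. Multiplying on the left by the conjugate transpose $x^{\dag}$ then gives $x^{\dag} A x = \lambda\, x^{\dag} x = \lambda \Vert x \Vert_{2}^{2}$.

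The crux is to argue that the scalar $x^{\dag} A x$ is real. Since it is a $1 \times 1$ quantity it equals its own conjugate transpose, and using the hypothesis $A^{\dag} = A$ one computes $(x^{\dag} A x)^{\dag} = x^{\dag} A^{\dag} x = x^{\dag} A x$; a complex number equal to its own complex conjugate must be real. This is really the only substantive step, and it is precisely where Hermiticity enters in an essential way — for a non-Hermitian matrix this quantity need not coincide with its conjugate, which is consistent with general matrices having genuinely complex eigenvalues.

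Finally, because $x \neq 0$ we have $x^{\dag} x = \Vert x \Vert_{2}^{2} > 0$, a strictly positive real number, so I would solve for $\lambda = (x^{\dag} A x) / \Vert x \Vert_{2}^{2}$, expressing it as a real number divided by a positive real number and hence itself real. As $\lambda$ was an arbitrary element of the spectrum, every eigenvalue of $A$ is real. I do not anticipate any genuine obstacle here; the only points requiring minor care are recording that the eigenvector is nonzero so that the denominator does not vanish, and correctly transposing the product $x^{\dag} A x$ via the anti-distributive law $(BC)^{\dag} = C^{\dag} B^{\dag}$ before invoking $A^{\dag} = A$.
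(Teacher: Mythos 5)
Your proof is correct and complete: it is the standard Rayleigh-quotient argument, fixing an eigenpair $Ax=\lambda x$, noting $(x^{\dag}Ax)^{\dag}=x^{\dag}A^{\dag}x=x^{\dag}Ax$ so that $x^{\dag}Ax\in\mathbb{R}$, and dividing by $\Vert x\Vert_{2}^{2}>0$. The paper itself gives no proof of this theorem, deferring to standard references such as Horn and Johnson, where precisely this argument appears, so your approach coincides with the intended one.
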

Using this theorem we can arrange the eigenvalues of a given Hermitian matrix $A \in M_{n}$, e.g, in a decreasing order
\begin{equation}\label{eq:B.1.4}
\lambda_{1} \geq ... \geq \lambda_{n},
\end{equation}
and this convention is used in this work.

\begin{theorem}{(Spectral theorem for Hermitian matrices)}\\
A matrix $A \in M_{n}$ is Hermitian if and only if there is a unitary $U \in M_{n}$ and diagonal $\Lambda \in M_{n}$ such that $A=U \Lambda U^{\dag}$.
\end{theorem}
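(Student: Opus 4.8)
The plan is to establish the two implications separately, first noting that the diagonal entries of $\Lambda$ are necessarily the (real) eigenvalues of $A$, so the factorization must in fact use a \emph{real} diagonal $\Lambda$. The reverse implication is the quick one: if $A = U\Lambda U^{\dagger}$ with $U$ unitary and $\Lambda$ real diagonal, then $A^{\dagger} = U \Lambda^{\dagger} U^{\dagger} = U \bar{\Lambda} U^{\dagger} = U \Lambda U^{\dagger} = A$, where I use that a diagonal matrix satisfies $\Lambda^{\dagger} = \bar{\Lambda}$ and that $\bar{\Lambda} = \Lambda$ for real entries. Hence $A$ is Hermitian and no further work is needed in this direction.

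For the forward implication I would argue by induction on the dimension $n$ via a single deflation step. The base case $n=1$ is immediate, since a $1\times 1$ Hermitian matrix is a real scalar. For the inductive step, let $A \in M_n$ be Hermitian. By Theorem~\ref{th:B.1.1} together with the fundamental theorem of algebra applied to the characteristic polynomial, $A$ possesses a real eigenvalue $\lambda_1$ with an associated unit eigenvector $v_1$. I would extend $v_1$ to an orthonormal basis $\{v_1,\dots,v_n\}$ of $\mathbb{C}^n$ and collect these vectors as the columns of a unitary matrix $V$. The key computation is then that $(V^{\dagger} A V)_{i1} = v_i^{\dagger} A v_1 = \lambda_1\, v_i^{\dagger} v_1 = \lambda_1 \delta_{i1}$, so the first column of $V^{\dagger} A V$ equals $\lambda_1 e_1$; since $V^{\dagger} A V$ is again Hermitian, its first row is forced to match, yielding the block-diagonal form
\begin{equation}
V^{\dagger} A V =
\begin{pmatrix}
\lambda_1 & 0 \\
0 & A'
\end{pmatrix},
\end{equation}
with $A' \in M_{n-1}$ Hermitian. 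Applying the inductive hypothesis to $A'$ gives a unitary $W$ and real diagonal $\Lambda'$ with $A' = W\Lambda' W^{\dagger}$, and assembling $U = V\,\mathrm{diag}(1,W)$ together with $\Lambda = \mathrm{diag}(\lambda_1,\Lambda')$ completes the factorization $A = U\Lambda U^{\dagger}$.

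The step I expect to carry the real content is the deflation: the vanishing of the off-diagonal block in $V^{\dagger} A V$ is precisely where Hermiticity is indispensable, since for a general matrix the same construction only reduces to upper-triangular (Schur) form and cannot be iterated down to a diagonal result. Two routine points must be handled but pose no genuine obstacle: that $V^{\dagger} A V$ inherits Hermiticity from $A$, which is immediate from $(V^{\dagger} A V)^{\dagger} = V^{\dagger} A^{\dagger} V = V^{\dagger} A V$, and that the block product $V\,\mathrm{diag}(1,W)$ is again unitary, so the assembled $U$ is legitimate. As an alternative one could shortcut the induction by invoking Schur triangularization directly: writing $A = U T U^{\dagger}$ with $T$ upper triangular, the Hermiticity of $T = U^{\dagger} A U$ forces every strictly upper entry to vanish and the diagonal to be real, so that $T$ is already the desired real diagonal $\Lambda$.
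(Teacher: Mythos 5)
Your proof is correct. Note that the paper itself supplies no proof of this statement: it appears in the appendix as background material, with all proofs explicitly deferred to standard references, so there is nothing internal to compare against. Your deflation-by-induction argument is exactly the classical proof found in those references, and the Schur-triangularization shortcut you mention at the end is the route Horn and Johnson actually take, so the two variants you give are really the same argument packaged two ways (Schur's theorem is itself proved by the deflation step you spell out). One genuinely useful point you add: as stated in the paper, the theorem omits the word ``real'' before ``diagonal,'' and the backward implication fails for a general complex diagonal $\Lambda$ (then $U\Lambda U^{\dagger}$ is merely normal); your observation that the diagonal entries are forced to be the real eigenvalues, together with the check $A^{\dagger}=U\bar{\Lambda}U^{\dagger}=U\Lambda U^{\dagger}=A$, is precisely what repairs that imprecision.
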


There exist an equivalent decomposition theorem for singular values.

\begin{theorem}{(Singular value decomposition)}\label{th:B.1.3}\\
Let $A\in M_{m \times n}$ be given and let $q=\min \lbrace m,n \rbrace$. Then there is a matrix $\Sigma=(\sigma_{ij}) \in M_{m \times n}$ with $\sigma_{ij}=0$ for all $i \neq j$ and $\sigma_{11}\geq \sigma_{22} \geq ... \geq \sigma_{qq}$, and there are two unitary matrices $V \in M_{m \times m} $ and $U \in M_{n \times n}$ such that $A= V \Sigma U^{\dag} $.
\end{theorem}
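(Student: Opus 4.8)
The plan is to reduce the singular value decomposition to the spectral theorem for Hermitian matrices, applied to the positive semidefinite matrix $A^{\dag}A \in M_{n}$. First I would observe that $A^{\dag}A$ is Hermitian and positive semidefinite, so by the spectral theorem there is a unitary $U \in M_{n}$ with $A^{\dag}A = U \Lambda U^{\dag}$, where $\Lambda = \mathrm{diag}(\lambda_{1},\dots,\lambda_{n})$ and the real eigenvalues (real by Theorem \ref{th:B.1.1}) are ordered $\lambda_{1} \geq \dots \geq \lambda_{n} \geq 0$, the non-negativity following from $x^{\dag}A^{\dag}Ax = \Vert Ax \Vert_{2}^{2} \geq 0$. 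I would then set $\sigma_{ii} = \sqrt{\lambda_{i}}$ for $i \leq q$, which fixes the diagonal entries of $\Sigma$, and let $r$ denote the number of strictly positive $\sigma_{ii}$.

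Writing $u_{1},\dots,u_{n}$ for the columns of $U$ (an orthonormal eigenbasis of $A^{\dag}A$), the second step constructs the left unitary $V$. For each index $i \leq r$ with $\sigma_{ii} > 0$ I would define $v_{i} = \sigma_{ii}^{-1} A u_{i}$. A short computation, $v_{i}^{\dag}v_{j} = (\sigma_{ii}\sigma_{jj})^{-1} u_{i}^{\dag} A^{\dag}A u_{j} = (\sigma_{ii}\sigma_{jj})^{-1}\lambda_{j}\,\delta_{ij} = \delta_{ij}$, shows that $\{v_{1},\dots,v_{r}\}$ is an orthonormal set in $\mathbb{C}^{m}$. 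I would then extend this set to a full orthonormal basis $\{v_{1},\dots,v_{m}\}$ of $\mathbb{C}^{m}$ (for instance by Gram–Schmidt) and collect these vectors as the columns of a unitary $V \in M_{m \times m}$.

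The final step verifies $A = V\Sigma U^{\dag}$, which I would check columnwise by confirming $A u_{j} = V \Sigma e_{j}$ for every $j$. For $j \leq r$ the right-hand side equals $\sigma_{jj}v_{j} = A u_{j}$ by construction. For $j > r$ one has $\lambda_{j} = 0$, so $\Vert Au_{j}\Vert_{2}^{2} = u_{j}^{\dag}A^{\dag}A u_{j} = \lambda_{j} = 0$ forces $Au_{j} = 0$, matching the vanishing $j$-th column of $\Sigma$. Hence $AU = V\Sigma$, and since $U$ is unitary this yields $A = V\Sigma U^{\dag}$, with the required ordering $\sigma_{11}\geq\dots\geq\sigma_{qq}$ inherited from that of the $\lambda_{i}$ via monotonicity of the square root.

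I expect the main obstacle to be the bookkeeping in the non-square, rank-deficient case. The vectors $v_{i}$ are only directly defined for the $r$ positive singular values, so the argument hinges on (i) verifying that $Au_{j}=0$ precisely when $\lambda_{j}=0$, which guarantees the omitted columns introduce no inconsistency, and (ii) legitimately padding $\{v_{1},\dots,v_{r}\}$ out to a basis of all of $\mathbb{C}^{m}$ so that $V$ attains size $m \times m$ while $\Sigma$ retains the rectangular shape $m \times n$. Keeping the two index ranges $q = \min\{m,n\}$ and $r \leq q$ straight across the cases $m < n$ and $m > n$ is the only genuinely delicate part of the verification.
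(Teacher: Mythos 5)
Your argument is correct, but there is nothing in the paper to compare it against: Theorem \ref{th:B.1.3} appears in the appendix of standard matrix-theory facts, and the authors explicitly defer all proofs of those statements to the cited textbooks (Horn--Johnson and others). Your route --- apply the spectral theorem to the positive semidefinite matrix $A^{\dag}A$, set $\sigma_{ii}=\sqrt{\lambda_{i}}$, build the left singular vectors as $v_{i}=\sigma_{ii}^{-1}Au_{i}$ for the positive singular values, complete to an orthonormal basis of $\mathbb{C}^{m}$, and verify $AU=V\Sigma$ columnwise --- is the classical eigendecomposition proof, and it is compatible with the toolkit the paper itself assembles (its spectral theorem and Theorem \ref{th:B.1.1}). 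A common alternative in the literature is a deflation induction: take a unit vector attaining $\Vert A\Vert_{2}$, rotate it into the first coordinate on both sides by unitaries, and induct on dimension; that version avoids forming $A^{\dag}A$ but is otherwise equivalent in strength. One point in your write-up should be made explicit rather than asserted: when $n>m$, the matrix $A^{\dag}A$ has $n$ eigenvalues while $\Sigma$ has only $q=m$ diagonal slots, so your claim ``for $j>r$ one has $\lambda_{j}=0$'' requires knowing that $A^{\dag}A$ has at most $q$ positive eigenvalues. This follows from $\mathrm{rank}(A^{\dag}A)=\mathrm{rank}(A)\leq\min\{m,n\}$, or directly from your own orthonormality computation: the vectors $\lambda_{j}^{-1/2}Au_{j}$ attached to positive eigenvalues form an orthonormal set in $\mathbb{C}^{m}$, hence there are at most $m$ of them. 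With that one line added, the proof is complete and the bookkeeping you worried about closes up cleanly.
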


Autonne and Takagi \cite{autonne, takagi} gave us a criterion based on singular values, which characterizes the class of symmetric matrices. 

\begin{theorem}{(Autonne-Takagi)}\label{th:B.1.4}\\
Let $A \in M_{n}$. Then $A=A^{T}$ if and only if there is a unitary matrix $U \in M_{n}$ and a nonnegative diagonal matrix $\Sigma$ such that $A= U \Sigma U^{T} $. The diagonal entries of $\Sigma$ are the singular values of $A$.
\end{theorem}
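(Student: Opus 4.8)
The plan is to prove the nontrivial (forward) implication by induction on $n$, constructing one \emph{Takagi vector} at a time and deflating; the reverse implication is immediate. Indeed, if $A=U\Sigma U^{T}$ with $\Sigma$ diagonal, then $A^{T}=(U\Sigma U^{T})^{T}=U\Sigma^{T}U^{T}=U\Sigma U^{T}=A$, so $A$ is symmetric. Moreover the map $A\mapsto U^{\dagger}A\bar U=\Sigma$ is a two-sided multiplication by the unitaries $U^{\dagger}$ and $\bar U$, hence preserves singular values, so the diagonal entries of $\Sigma$ are exactly the $\sigma_{i}(A)$.

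For the forward direction I would first record the operation along which the dimension gets reduced. For any unitary $Q$, the map $A\mapsto Q^{\dagger}A\bar Q$ preserves symmetry, since $(Q^{\dagger}A\bar Q)^{T}=Q^{\dagger}A^{T}\bar Q=Q^{\dagger}A\bar Q$, and it preserves singular values, being a two-sided unitary multiplication. The base case $n=1$ is handled by writing the single entry as $a=e^{i\theta}\vert a\vert$ and taking $U=(e^{i\theta/2})$, $\Sigma=(\vert a\vert)$.

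The heart of the argument, and the step I expect to be the main obstacle, is producing a unit \emph{Takagi vector}, i.e.\ a unit $u$ with $A\bar u=\sigma u$ for $\sigma=\sigma_{1}(A)=\Vert A\Vert_{2}$. I would build it as follows. Pick a unit right singular vector $x$ for the top singular value, so that $A^{\dagger}Ax=\bar A A x=\sigma^{2}x$, and (assuming $\sigma>0$; if $\sigma=0$ then $A=0$ and the claim is trivial) set $y=\tfrac{1}{\sigma}Ax$. Then $\Vert y\Vert=1$ and $\bar A y=\sigma x$, whence $A\bar y=\sigma\bar x$. Consider $u_{\pm}=x\pm\bar y$: a direct computation gives $Au_{+}=\sigma y+\sigma\bar x=\sigma\overline{u_{+}}$ and $Au_{-}=\sigma y-\sigma\bar x=-\sigma\overline{u_{-}}$. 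Since $u_{+}+u_{-}=2x\neq 0$, at least one of $u_{\pm}$ is nonzero; normalizing that one, and in the $u_{-}$ case multiplying by the phase $i$ to convert $Av=-\sigma\bar v$ into $Av=\sigma\bar v$, yields a unit $v$ with $Av=\sigma\bar v$, so that $u:=\bar v$ obeys the column condition $A\bar u=\sigma u$. The delicate points are the possible vanishing of one of $u_{\pm}$ and the sign/phase fix, which is precisely where the symmetry $A=A^{T}$ enters in an essential way.

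With the Takagi vector in hand the deflation is routine. I would extend $u$ to a unitary $U_{1}$ with first column $u$ and form $A'=U_{1}^{\dagger}A\bar U_{1}$, which is symmetric and shares the singular values of $A$. Its first column is $U_{1}^{\dagger}A\bar u=\sigma U_{1}^{\dagger}u=\sigma e_{1}$, and by symmetry its first row is $\sigma e_{1}^{T}$, so $A'=\sigma\oplus A''$ with $A''$ symmetric of size $n-1$. Applying the inductive hypothesis to $A''$ and reassembling through $U_{1}$ produces $A=U\Sigma U^{T}$ with $U$ unitary and $\Sigma$ nonnegative diagonal, after an optional permutation ordering the diagonal as $\sigma_{1}\geq\cdots\geq\sigma_{n}$. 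Since every reduction step preserves singular values, the diagonal entries of $\Sigma$ are the singular values of $A$, which completes the proof.
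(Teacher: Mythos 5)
Your proof is correct. A point of comparison: the paper itself does not prove this theorem at all --- it is stated in the appendix as background material, with proofs explicitly deferred to the cited references (Horn--Johnson and others). So your argument is not paralleling anything in the paper; it is a complete, self-contained proof, and it is essentially the standard textbook route (the one found in Horn--Johnson): reduce to constructing a single \emph{Takagi vector} $u$ with $A\bar u=\sigma_{1}(A)\,u$, then deflate by a unitary with first column $u$ and induct. The two delicate points are handled properly: your construction $u_{\pm}=x\pm\bar y$ with $Au_{\pm}=\pm\sigma\overline{u_{\pm}}$ is where the symmetry $A=A^{T}$ (equivalently $A^{\dagger}=\bar A$, so that $\bar A Ax=\sigma^{2}x$) enters, the observation $u_{+}+u_{-}=2x\neq 0$ correctly rules out simultaneous vanishing, and the phase fix $v\mapsto iv$ to flip $Av=-\sigma\bar v$ into $Av=\sigma\bar v$ is verified by a one-line computation ($A(iv)=-i\sigma\bar v=\sigma\overline{(iv)}$). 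The bookkeeping in the deflation is also right: $A'=U_{1}^{\dagger}A\bar U_{1}$ preserves both symmetry and singular values, its first column is $\sigma e_{1}$, and reassembly gives $A=U_{1}A'U_{1}^{T}$, so the induction closes. One could alternatively derive the factorization non-inductively from the spectral decomposition of the unitary $V^{T}W$ appearing in an SVD $A=V\Sigma W^{\dagger}$ of a symmetric $A$, but your deflation argument is equally rigorous and arguably more elementary.
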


Since there are matrices for which both sets of eigenvalues and of singular values are well defined, the natural question arises, how are these quantities connected? The following theorem provides the basic relation between these numbers.

\begin{theorem}\label{th:B.1.5}
Let $A \in M_{n}$ have singular values $\sigma_{1}(A) \geq ... \geq \sigma_{n}(A) \geq 0$ and eigenvalues $\lbrace \lambda_{1}(A), ..., \lambda_{n}(A) \rbrace \in \mathbb{C}$ ordered so that $\vert \lambda_{1}(A) \vert \geq ... \geq \vert \lambda_{n}(A) \vert $. Then
\begin{equation}
\begin{split}
&\vert \lambda_{1}(A)...\lambda_{k}(A) \vert \leq \sigma_{1}(A)...\sigma_{k}(A) \ for \ k=1,...,n \\ 
&with \ equality \ for \ k=n.
\end{split}
\end{equation}
\end{theorem}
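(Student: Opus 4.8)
The plan is to deduce the product inequality from the already-established bound $\rho(A)\le\Vert A\Vert$ \eqref{th:B.3.1}, applied not to $A$ itself but to its $k$-th compound matrix. Recall that the $k$-th compound (equivalently the $k$-th exterior power $\wedge^{k}A$) is the $\binom{n}{k}\times\binom{n}{k}$ matrix $C_{k}(A)$ whose entries are the $k\times k$ minors of $A$; by the Cauchy--Binet formula it is multiplicative, $C_{k}(XY)=C_{k}(X)C_{k}(Y)$, and $C_{k}$ sends unitary matrices to unitary matrices and diagonal (resp. triangular) matrices to diagonal (resp. triangular) matrices. The whole argument then reduces to identifying the spectral radius and the spectral norm of $C_{k}(A)$ with the two sides of the claimed inequality.

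For the singular values I would feed the singular value decomposition $A=V\Sigma U^{\dagger}$ (Theorem~\ref{th:B.1.3}) through $C_{k}$, obtaining $C_{k}(A)=C_{k}(V)\,C_{k}(\Sigma)\,C_{k}(U^{\dagger})$ with $C_{k}(V),C_{k}(U)$ unitary and $C_{k}(\Sigma)$ diagonal with entries $\prod_{j\in S}\sigma_{j}(A)$ indexed by the $k$-element subsets $S\subseteq\{1,\dots,n\}$. Hence the singular values of $C_{k}(A)$ are exactly these products, and the largest one, attained at $S=\{1,\dots,k\}$, equals $\sigma_{1}(A)\cdots\sigma_{k}(A)$, so that $\Vert C_{k}(A)\Vert_{2}=\sigma_{1}(A)\cdots\sigma_{k}(A)$. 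For the eigenvalues I would instead feed the Schur unitary triangularization $A=QTQ^{\dagger}$ through $C_{k}$: then $C_{k}(T)$ is upper triangular with diagonal entries $\prod_{j\in S}\lambda_{j}(A)$, so the eigenvalues of $C_{k}(A)$ are precisely these products, and the one of largest modulus, again attained at $S=\{1,\dots,k\}$ under the ordering $\vert\lambda_{1}\vert\ge\cdots\ge\vert\lambda_{n}\vert$, gives $\rho(C_{k}(A))=\vert\lambda_{1}(A)\cdots\lambda_{k}(A)\vert$.

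With these two identifications in hand, the bound $\rho\le\Vert\cdot\Vert_{2}$ furnished by \eqref{th:B.3.1}, applied to $C_{k}(A)$, reads
\begin{equation}
\vert\lambda_{1}(A)\cdots\lambda_{k}(A)\vert=\rho(C_{k}(A))\le\Vert C_{k}(A)\Vert_{2}=\sigma_{1}(A)\cdots\sigma_{k}(A),
\end{equation}
which is the assertion for each $k=1,\dots,n$. The equality at $k=n$ I would treat separately and more elementarily: there $C_{n}(A)$ is the $1\times1$ matrix $\det A$, and since the determinant is the product of the eigenvalues while $\vert\det A\vert=\vert\det V\vert\,\det\Sigma\,\vert\det U\vert=\sigma_{1}(A)\cdots\sigma_{n}(A)$ by unitarity of $V$ and $U$, both sides coincide with $\vert\det A\vert$.

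The main obstacle is not the final one-line estimate but the supporting structural facts about compound matrices --- chiefly the multiplicativity $C_{k}(XY)=C_{k}(X)C_{k}(Y)$ (Cauchy--Binet) and the ensuing statements that the eigenvalues and singular values of $C_{k}(A)$ are the $k$-fold products of those of $A$. If one prefers to avoid the exterior-power machinery, an alternative is a direct induction on $k$ resting on Schur triangularization together with the variational (min--max) characterization of singular values; this is more self-contained but combinatorially heavier, and the compound-matrix route is cleaner precisely because it reuses the already-proven inequality \eqref{th:B.3.1}.
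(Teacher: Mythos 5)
Your proof is correct. Note that the paper itself contains no proof of this statement: it appears in the appendix as a quoted result, with proofs explicitly deferred to the cited references (Horn--Johnson and others), so there is no internal argument to compare against. Your compound-matrix route is precisely the classical proof of Weyl's majorant theorem given in those references: Schur triangularization plus multiplicativity of $C_{k}$ identifies $\rho(C_{k}(A))$ with $\vert\lambda_{1}(A)\cdots\lambda_{k}(A)\vert$, the singular value decomposition identifies $\Vert C_{k}(A)\Vert_{2}$ with $\sigma_{1}(A)\cdots\sigma_{k}(A)$, and the inequality $\rho(\cdot)\leq\Vert\cdot\Vert$ of \eqref{th:B.3.1} (applicable since the spectral norm is an induced, hence submultiplicative, matrix norm) closes the argument; the equality at $k=n$ via $C_{n}(A)=\det A$ and unitary invariance of $\vert\det\cdot\vert$ is also handled correctly. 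The supporting facts you flag as the main burden --- Cauchy--Binet multiplicativity, $C_{k}(A^{\dagger})=C_{k}(A)^{\dagger}$, and preservation of unitarity, diagonality and triangularity (with diagonal minors $\det T[\alpha\vert\alpha]=\prod_{i\in\alpha}t_{ii}$) --- are all standard and true as stated, so the proof is complete modulo exactly the machinery you identify, and it has the merit of reusing the paper's own norm bound rather than invoking an independent variational argument.
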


Using the above definitions and basic theorems a theorem which bounds eigenvalues of the sum of two matrices can be formulated. In a general case, we can say almost nothing about eigenvalues of the sum of matrices. However, for Hermitian matrices, the situation is more accessible and we have a set of helpful relations. We will present only the main result provided by Weyl \cite{Weyl1912}, however, it can be extended to more specific cases. 
\begin{theorem}{(Weyl's inequalities)}\label{eq:B.2.1}\\
Let A and B be $n \times n$ Hermitian matrices. Then
\begin{equation}
\begin{split}
\lambda_{j}(A+B) \leq \lambda_{i}(A) + \lambda_{j-i+1}(B) \ for \ i \leq j \\ 
\lambda_{j}(A+B) \geq \lambda_{i}(A) + \lambda_{j-i+n}(B) \ for \ i \geq j
\end{split}
\end{equation}
\end{theorem}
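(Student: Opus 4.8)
The plan is to derive Weyl's inequalities from the Courant--Fischer min-max characterization of the eigenvalues of a Hermitian matrix, the only additional ingredient being an elementary dimension count for the intersection of two subspaces. First I would recall the variational principle in the decreasing-order convention of \eqref{eq:B.1.4}: for any Hermitian $M \in M_n$ and each index $k$,
\[
\lambda_{k}(M)=\min_{\dim V = n-k+1}\ \max_{0\neq x\in V}\frac{\langle Mx,x\rangle}{\langle x,x\rangle}
=\max_{\dim V = k}\ \min_{0\neq x\in V}\frac{\langle Mx,x\rangle}{\langle x,x\rangle}.
\]
This is itself a consequence of the Spectral theorem for Hermitian matrices stated above, which furnishes real eigenvalues and an orthonormal eigenbasis.

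Next I would establish the upper bound, the case $i\leq j$. Let $u_{1},\dots,u_{n}$ and $v_{1},\dots,v_{n}$ be orthonormal eigenbases of $A$ and $B$, with eigenvalues in decreasing order. Set $S_{A}=\mathrm{span}\{u_{i},\dots,u_{n}\}$ and $S_{B}=\mathrm{span}\{v_{j-i+1},\dots,v_{n}\}$, of dimensions $n-i+1$ and $n-j+i$ respectively. Because each of these subspaces is spanned by the eigenvectors belonging to the \emph{smaller} eigenvalues, the Rayleigh quotient of $A$ on $S_{A}$ is at most $\lambda_{i}(A)$, and that of $B$ on $S_{B}$ is at most $\lambda_{j-i+1}(B)$. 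The dimension inequality then gives $\dim(S_{A}\cap S_{B})\geq (n-i+1)+(n-j+i)-n = n-j+1$, so I may select a subspace $W\subseteq S_{A}\cap S_{B}$ with $\dim W = n-j+1$. For every nonzero $x\in W$,
\[
\frac{\langle (A+B)x,x\rangle}{\langle x,x\rangle}
=\frac{\langle Ax,x\rangle}{\langle x,x\rangle}+\frac{\langle Bx,x\rangle}{\langle x,x\rangle}
\leq \lambda_{i}(A)+\lambda_{j-i+1}(B).
\]
Feeding $W$ into the minimum (over $(n-j+1)$-dimensional subspaces) form of the characterization of $\lambda_{j}(A+B)$ yields $\lambda_{j}(A+B)\leq \lambda_{i}(A)+\lambda_{j-i+1}(B)$.

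Finally I would obtain the lower bound, the case $i\geq j$, for free by applying the upper bound to $-A$ and $-B$ together with the reflection identity $\lambda_{k}(-M)=-\lambda_{n+1-k}(M)$. Writing the just-proved inequality for $-A,-B$ at indices $i'\leq j'$ and substituting $j=n+1-j'$ and $i=n+1-i'$ (so that $i\geq j$) converts it, after simplifying the index $n-(j-i+1)+1 = j-i+n$, exactly into $\lambda_{j}(A+B)\geq \lambda_{i}(A)+\lambda_{j-i+n}(B)$.

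I expect the only real difficulty to be the index bookkeeping: maintaining the decreasing-order convention consistently, verifying that $S_{A}$ and $S_{B}$ have the stated dimensions and that their intersection is large enough to be used at dimension $n-j+1$, and tracking the index shifts under the map $M\mapsto -M$ so that the shift $j-i+1$ of the upper bound correctly becomes $j-i+n$ in the lower bound. No analysis beyond the variational principle and linear-algebraic dimension counting is required.
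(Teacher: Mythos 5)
Your proof is correct, but note that the paper itself contains no proof of this theorem to compare against: Weyl's inequalities appear in the appendix as a classical result (attributed to Weyl, 1912), with the appendix explicitly deferring all proofs to the cited textbooks such as Horn and Johnson. Your argument --- the Courant--Fischer min-max characterization in the decreasing-order convention, the subspaces $S_{A}=\mathrm{span}\{u_{i},\dots,u_{n}\}$ and $S_{B}=\mathrm{span}\{v_{j-i+1},\dots,v_{n}\}$ with the dimension count $\dim(S_{A}\cap S_{B})\geq(n-i+1)+(n-j+i)-n=n-j+1$, and then the reflection $\lambda_{k}(-M)=-\lambda_{n+1-k}(M)$ to convert the upper bound into the lower bound --- is exactly the standard proof found in those references, so in effect you have supplied the proof the paper omits rather than taken a different route. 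One cosmetic slip worth fixing: the displayed index identity $n-(j-i+1)+1=j-i+n$ is false if $i,j$ there denote the final (unprimed) indices, since the left-hand side then equals $n-j+i$; the correct bookkeeping uses the primed indices, namely $n+1-(j'-i'+1)=n-(j'-i')=n-(i-j)=j-i+n$, because $j'-i'=i-j$ under your substitution $j=n+1-j'$, $i=n+1-i'$. Since that substitution does produce exactly $\lambda_{j}(A+B)\geq\lambda_{i}(A)+\lambda_{j-i+n}(B)$ for $i\geq j$, this is a notational blemish, not a gap in the argument.
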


After some work, the above relations can be transformed to the following form
\begin{equation}
\vert \lambda_{j}(A+B)-\lambda_{j}(A)\vert \leq \rho(B).
\end{equation}

Despite the fact that Weyl's inequalities can be used to estimate eigenvalues of the sum without any restriction to scale of its summands, they give the best results if one of the matrices can be treated as a small additive perturbation of the second matrix which is a case of the seesaw mechanism.

As singular values are defined as square roots of Hermitian matrix $A^{\dag}A$ we should expect that similar result to Weyl's inequalities is also valid for singular values. However, due to their nonegative nature, we can only estimate the singular values of the sum from above.

\begin{theorem}{(Weyl's inequality for singular values)}\\
Let A and B be a $m \times n$ matrices and let $q=\min \lbrace m,n \rbrace $. Then
\begin{equation}
\begin{split}
\sigma_{j}(A+B) \leq \sigma_{i}(A) + \sigma_{j-i+1}(B) \ for \ i \leq j \\ 
\end{split}
\end{equation}
\end{theorem}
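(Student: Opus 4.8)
The plan is to reduce this singular-value statement to the Hermitian case already established in Theorem \ref{eq:B.2.1} by passing to the Jordan--Wielandt Hermitian dilation. Given any $M \in M_{m \times n}$, I would introduce the Hermitian matrix
\[
\widetilde{M} = \begin{pmatrix} 0 & M \\ M^{\dagger} & 0 \end{pmatrix} \in M_{m+n},
\]
and the first task is to record precisely how its spectrum encodes the singular values of $M$.

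The key preliminary step is to show, using the singular value decomposition (Theorem \ref{th:B.1.3}), that the eigenvalues of $\widetilde{M}$ are exactly $\pm\sigma_{1}(M),\ldots,\pm\sigma_{q}(M)$ together with $|m-n|$ zeros. Writing $M = V\Sigma U^{\dagger}$ and conjugating $\widetilde{M}$ by the block-unitary $\mathrm{diag}(V,U)$ reduces the claim to $M=\Sigma$, in which case the eigenvalues are read off from the decoupled $2\times 2$ blocks $\left(\begin{smallmatrix} 0 & \sigma_{i} \\ \sigma_{i} & 0 \end{smallmatrix}\right)$. Consequently, when the eigenvalues of $\widetilde{M}$ are arranged in non-increasing order as in \eqref{eq:B.1.4}, the top $q$ of them satisfy $\lambda_{k}(\widetilde{M}) = \sigma_{k}(M)$ for $k=1,\ldots,q$.

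With this correspondence in hand the conclusion is immediate. The dilation is additive, $\widetilde{A+B} = \widetilde{A} + \widetilde{B}$, so applying the Hermitian Weyl inequality $\lambda_{j}(\widetilde{A}+\widetilde{B}) \leq \lambda_{i}(\widetilde{A}) + \lambda_{j-i+1}(\widetilde{B})$ for $i \leq j$ and translating each eigenvalue back into the corresponding singular value yields $\sigma_{j}(A+B) \leq \sigma_{i}(A) + \sigma_{j-i+1}(B)$.

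The only genuine obstacle is index bookkeeping, since the identification $\lambda_{k}\mapsto\sigma_{k}$ is valid only for $k \leq q$. I must therefore check that all three indices $j$, $i$, and $j-i+1$ stay in the range $\{1,\ldots,q\}$; this holds automatically because $i \leq j \leq q$ and $1 \leq j-i+1 \leq j \leq q$. A related subtlety is that $\widetilde{M}$ carries eigenvalues of both signs, so one must confirm that the singular values are the \emph{largest} eigenvalues of the dilation and not interleaved with the negative block $-\sigma_{k}(M)$, which is exactly what the ordering established above guarantees. This same feature explains the asymmetry noted before the statement: the lower Hermitian Weyl bound would involve an eigenvalue of $\widetilde{B}$ drawn from its negative spectrum, so it produces no clean counterpart for the nonnegative singular values, and only the upper estimate survives.
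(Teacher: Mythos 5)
Your proof is correct. Note, however, that the paper itself never proves this statement: it appears in the appendix, whose preamble explicitly defers all proofs to the cited references (Horn--Johnson and others), so there is no in-paper argument to compare against. What you have written is in fact the canonical proof from that literature: pass to the Jordan--Wielandt dilation $\widetilde{M}=\left(\begin{smallmatrix} 0 & M \\ M^{\dagger} & 0\end{smallmatrix}\right)$, identify its spectrum as $\{\pm\sigma_{k}(M)\}\cup\{0\}^{|m-n|}$ via the SVD and the decoupled $2\times 2$ blocks, use additivity of the dilation, and invoke the Hermitian Weyl inequality. This route also fits the paper's own toolkit seamlessly, since it reduces the singular-value statement to the Hermitian Weyl theorem and the SVD theorem that the paper already states. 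The two delicate points are handled properly: you verify that the indices $j$, $i$, and $j-i+1$ all stay within $\{1,\dots,q\}$, where the identification $\lambda_{k}(\widetilde{M})=\sigma_{k}(M)$ is legitimate, and you observe that the nonnegative singular values occupy precisely the top $q$ positions of the ordered spectrum of the dilation, so no interleaving with the negative branch occurs. Your closing remark correctly explains the asymmetry the paper alludes to before the statement: the Hermitian lower bound would reach into the negative part of the dilated spectrum, which is why only the upper estimate descends to singular values.
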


\subsection{Eigenspace}

The behavior of eigenvectors of a matrix $A$ under the perturbation is much more complicated than that of the eigenvalues. However, in the case of subspaces spanned by eigenvectors there are theorems allowing quantitative prediction of their perturbation. Estimation of the difference between perturbated and unperturbed eigenspaces can be done with the help of orthogonal projections as the following example shows.
Let $S$ be eigenspace of $A$ spanned by some of its eigenvectors and let $S^{\perp}$ be its orthogonal complement. Then $A$ can be decomposed as
\begin{equation}
A=E_{0}A_{0}E_{0}^{\dag}+E_{1}A_{1}E_{1}^{\dag}
\end{equation}
where $E_{0}$ is the orhonormal basis for $S$ and $E_{1}$ is the orhonormal basis for $S^{\perp}$.
Similarly, for $\hat{A}=A+E$ and eigenspace $\hat{S}$ we have
\begin{equation}
\hat{A}=F_{0}\Lambda_{0}F_{0}^{\dag}+F_{1}\Lambda_{1}F_{1}^{\dag}   
\end{equation}
We would like to know how well vectors in $\hat{S}$ approximate vectors in $S$. 
The orthogonal projectors onto $S$ and $\hat{S}$ are given by $E_{0}E_{0}^{\dag}$ and $F_{0}F_{0}^{\dag}$ respectively. Every vector $x$ in $S$ can be written as $x=E_{0}\alpha$ where $\alpha \in \mathbb{C}^{dimS}$ and its projection onto $\hat{S}$ is $\hat{x}=F_{0}F_{0}^{\dag}E_{0}\alpha$. Thus
\begin{align}
\Vert x-\hat{x} \Vert 
&= \Vert E_{0}\alpha - F_{0}F_{0}^{\dag}E_{0}\alpha \Vert = 
\Vert (I - F_{0}F_{0}^{\dag})E_{0} \alpha \Vert = \nonumber \\
&=\Vert F_{1}F_{1}^{\dag}E_{0}\alpha \Vert = \Vert  F_{1}^{\dag}E_{0}\alpha \Vert. 
\end{align}
Hence $F_{1}^{\dag}E_{0}$ tells us how close $\hat{x}$ is to $x$.

Before we move to the main perturbation theorem, let us state the auxiliary theorem which highlights geometric aspects of the relation between subspaces \cite{bhatia1}.

\begin{theorem}
Let $X_{1},Y_{1}$ be $n \times l$ matrices with orthonormal columns.Then there exist $l \times l$ unitary matrices $U_{1}$ and $V_{1}$, and an $n \times n$ unitary matrix $Q$, such that if $2l \leq n$, then
\begin{equation}
QX_{1}U_{1}=
\left(
\begin{array}{c}
I \\
0 \\
0
\end{array}
\right),
\end{equation}

\begin{equation}
QY_{1}V_{1}=
\left(
\begin{array}{c}
C \\
S \\
0
\end{array}
\right),
\end{equation}
where $C,S$ are diagonal matrices with diagonal entries $0 \leq c_{1} \leq ... \leq c_{l} \leq 1$ and $1 \geq s_{1} \geq s_{1} \geq ... \geq s_{l} \geq 0$, respectively, and $C^{2} + S^{2} =I$.
\end{theorem}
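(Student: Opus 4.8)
The plan is to recognize this statement as the canonical (CS-type) form for a pair of $l$-dimensional subspaces, whose diagonal entries $c_i,s_i$ are the cosines and sines of the principal angles between $\mathrm{range}(X_1)$ and $\mathrm{range}(Y_1)$. The strategy is to spend the single left unitary $Q$ first on normalizing $X_1$, and then to exploit the orthonormality constraints $X_1^{\dagger}X_1 = Y_1^{\dagger}Y_1 = I$ to reduce $Y_1$ with the remaining freedom. The one subtlety to keep in mind throughout is that $Q$ must be shared by $X_1$ and $Y_1$, whereas the right factors $U_1,V_1$ may differ.

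First I would extend the orthonormal columns of $X_1$ to a full orthonormal basis, assembling a unitary $[\,X_1\;X_\perp\,]$, and set $Q^{(0)}=[\,X_1\;X_\perp\,]^{\dagger}$, so that $Q^{(0)}X_1=\left(\begin{smallmatrix} I_l \\ 0\end{smallmatrix}\right)$. Writing $Q^{(0)}Y_1=\left(\begin{smallmatrix} A \\ B\end{smallmatrix}\right)$ with $A\in M_l$ and $B\in M_{(n-l)\times l}$, unitarity of $Q^{(0)}$ together with $Y_1^{\dagger}Y_1=I$ gives the key identity $A^{\dagger}A+B^{\dagger}B=I$. Next I would apply the singular value decomposition (Theorem~\ref{th:B.1.3}) to $A$, writing $A=\Phi\,C\,\Psi^{\dagger}$ with $C=\mathrm{diag}(c_1,\dots,c_l)$, and set $V_1:=\Psi$ together with a block-diagonal left factor $\mathrm{diag}(\Phi^{\dagger},H)$ to be prepended to $Q^{(0)}$; choosing the companion right factor $U_1:=\Phi$ keeps $X_1$ in the form $(I;0;0)$ while turning the top block of $Y_1$ into exactly $C$. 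The point is then that $A^{\dagger}A=V_1 C^2 V_1^{\dagger}$, so the identity above forces $(BV_1)^{\dagger}(BV_1)=I-C^2=:S^2$, i.e. the columns of $BV_1$ are mutually orthogonal with norms $s_i=\sqrt{1-c_i^2}$; normalizing the nonzero columns and completing to an orthonormal basis of $\mathbb{C}^{n-l}$ produces the unitary $H$ with $HBV_1=\left(\begin{smallmatrix} S \\ 0\end{smallmatrix}\right)$, and $C^2+S^2=I$ holds by construction. Finally I would order the singular values of $A$ increasingly by a permutation so that $c_1\le\dots\le c_l$ and $s_1\ge\dots\ge s_l$.

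The main obstacle is the simultaneous reduction of $A$ and $B$ by one common right unitary $V_1$: this succeeds only because $B^{\dagger}B=I-A^{\dagger}A$ shares its eigenvectors with $A^{\dagger}A$, so the same $\Psi$ that diagonalizes $A$ also orthogonalizes the columns of $B$. The hypothesis $2l\le n$ enters precisely at the last step, where the $l$ orthonormal vectors obtained from $BV_1$ must be completed inside the $(n-l)$-dimensional ambient space and still leave $n-2l$ rows of zeros below $S$; this needs $n-l\ge l$. I would also take care of degeneracies — columns of $BV_1$ with $s_i=0$ contribute no constraint and are absorbed into the free part of the completion — and note that $c_i\le 1$ follows automatically from $A^{\dagger}A\preceq I$.
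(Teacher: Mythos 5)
Your proposal cannot be compared against a proof in the paper, because there is none: this theorem appears only as quoted background material in the appendix (attributed to the literature, with the appendix explicitly deferring all proofs to the cited references), so you are effectively supplying the missing argument. Your proof is correct, and it is the standard derivation of the (partial) CS decomposition. The two points on which such arguments usually founder are both handled properly. First, the left unitary must be shared between $X_{1}$ and $Y_{1}$; your block factor $\mathrm{diag}(\Phi^{\dagger},H)$ together with the right factor $U_{1}=\Phi$ keeps the reduced form of $X_{1}$ intact, since $\Phi^{\dagger}I\Phi=I$, so spending $Q^{(0)}$ on $X_{1}$ first costs nothing. Second, a single right unitary $V_{1}=\Psi$ must simultaneously bring the top block $A$ to diagonal form and orthogonalize the columns of the bottom block $B$; this works precisely because $B^{\dagger}B=I-A^{\dagger}A$ is diagonalized by the same $\Psi$ that appears in the SVD of $A$, giving
\begin{equation}
(B\Psi)^{\dagger}(B\Psi)=I-C^{2}=S^{2},
\end{equation}
which is the heart of the matter and which you identify explicitly. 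The hypothesis $2l\leq n$ is also located correctly: writing $B\Psi=WS$ with $W$ an $(n-l)\times l$ matrix of orthonormal columns (choosing arbitrary completions in the columns where $s_{i}=0$) requires $l\leq n-l$, and taking $H=\left[\,W\;W_{\perp}\,\right]^{\dagger}$ produces the block $\left(\begin{smallmatrix} S \\ 0 \end{smallmatrix}\right)$ with the residual $(n-2l)\times l$ zero block. The final reordering is harmless: permuting the singular values of $A$ into increasing order $c_{1}\leq\dots\leq c_{l}$ automatically yields $s_{1}\geq\dots\geq s_{l}$ because $s_{i}=\sqrt{1-c_{i}^{2}}$, and $c_{i}\leq 1$ follows from $A^{\dagger}A\preceq I$ as you note. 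The only cosmetic improvement would be to display the factorization $B\Psi=WS$ explicitly rather than describing it as "normalizing the nonzero columns," but the content is already there.
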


The relation between matrices $C$ and $S$ resembles the relation between trigonometric functions. This allows us to define angles between subspaces.

\begin{defi}
Let $\mathcal{E}$ and $\mathcal{F}$ let be $l-$dimensional subspaces of $\mathbb{C}^{n}$. The angel operator between $\mathcal{E}$ and $\mathcal{F}$ is defined as follows
\begin{equation}
\Theta(\mathcal{E}, \mathcal{F})=\arcsin S.    
\end{equation}
It is a diagonal matrix whose diagonal elements are called the canonical (principal) angles between subspaces $\mathcal{E}$ and $\mathcal{F}$.
\end{defi}

Moreover, using the matrix norm we can define the gap between two subspaces.

\begin{defi}
Let $\mathcal{E}$ and $\mathcal{F}$ let be $l-$dimensional subspaces of $\mathbb{C}^{n}$. Let $E$ and and $F$ be orthogonal projection onto $\mathcal{E}$ and $\mathcal{F}$ respectively. The distance between subspaces $\mathcal{E}$ and $\mathcal{F}$ is defined to be
\begin{equation}
\Vert E-F \Vert = \Vert E^{\perp}F \Vert = \Vert \sin \Theta \Vert  
\end{equation}
\end{defi}

The perturbation behavior between eigenspaces of Hermitian matrices is described by the renown Davis-Kahan theorem \cite{Davis}.
 
\begin{theorem}\label{DK}
Let $A$ and $B$ be Hermitian operators, and let $S_{1}$ be an interval $[a,b]$ and $S_{2}$ be the complement of $(a - \delta, b + \delta)$ in $\mathbb{R}$. Let $E=P_{A}(S_{1}), F^{\perp}=P_{B}(S_{2})$ be orthogonal projections onto subspaces spanned by eigenvectors of $A$ and $B$ corresponding to eigenvalues from $S_{1}$ and $S_{2}$ respectively. Then for every unitarily invariant norm, 
\begin{small}
\begin{equation}
||| EF^{\perp}||| \leq \frac{1}{\delta} |||E(A-B)F^{\perp}||| \leq \frac{1}{\delta} |||A-B|||,
\end{equation}
where 
\begin{equation}
\delta=dist(\sigma(A),\sigma(B))= min \lbrace | \lambda - \mu |: \lambda \in \sigma(A), \mu \in \sigma(B) \rbrace.
\end{equation}
\end{small}
\end{theorem}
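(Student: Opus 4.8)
The plan is to reduce the statement to a Sylvester equation governed by the spectral gap $\delta$. First I would record the two commutation relations built into the hypotheses: since $E=P_A(S_1)$ is the orthogonal projection onto an invariant subspace of the Hermitian operator $A$, it commutes with $A$, i.e. $AE=EA$; likewise $BF^\perp=F^\perp B$. Let $A_1$ denote $A$ restricted to $\mathrm{range}(E)$, a Hermitian operator with $\sigma(A_1)\subseteq S_1=[a,b]$, and let $B_2$ denote $B$ restricted to $\mathrm{range}(F^\perp)$, with $\sigma(B_2)\subseteq S_2$. By the very definition of $S_2$ as the complement of $(a-\delta,b+\delta)$, the two spectra are separated: $\mathrm{dist}(\sigma(A_1),\sigma(B_2))\geq\delta$.

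Next I would set $X=EF^\perp$, regarded as a map from $\mathrm{range}(F^\perp)$ into $\mathrm{range}(E)$, and establish the identity
\begin{equation}
A_1 X - X B_2 = E(A-B)F^\perp .
\end{equation}
This follows from the commutation relations alone: $EAF^\perp=EAEF^\perp$ because $EA(I-E)=EA-EAE=0$, and $EBF^\perp=EF^\perp BF^\perp$ because $(I-F^\perp)BF^\perp=BF^\perp-F^\perp B=0$; assembling these gives $A_1X-XB_2=EAF^\perp-EBF^\perp=E(A-B)F^\perp$. The payoff of this step is conceptual: the quantity we want to bound, $EF^\perp$, is now exhibited as the solution of a Sylvester equation whose inhomogeneity is the perturbation $E(A-B)F^\perp$ and whose coefficient operators have disjoint spectra.

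The core estimate is then the Sylvester bound $|||X|||\leq\delta^{-1}|||E(A-B)F^\perp|||$, which I would obtain from the operator inequality $|||A_1X-XB_2|||\geq\delta\,|||X|||$, valid for every unitarily invariant norm whenever $\mathrm{dist}(\sigma(A_1),\sigma(B_2))\geq\delta$. To see where this comes from, diagonalize $A_1=\sum_i\alpha_i u_iu_i^{\dagger}$ and $B_2=\sum_j\beta_j w_jw_j^{\dagger}$; in these eigenbases the equation becomes the entrywise relation $(\alpha_i-\beta_j)X_{ij}=R_{ij}$ with $R=E(A-B)F^\perp$, so that $X$ is recovered from $R$ by the Schur (Hadamard) multiplier with symbol $[1/(\alpha_i-\beta_j)]$, every entry of which has modulus at most $1/\delta$. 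For the Frobenius norm this yields $\|X\|_F\leq\delta^{-1}\|R\|_F$ immediately. The hard part, and the only genuinely nontrivial step, is that a bounded Schur symbol does not by itself control an arbitrary unitarily invariant norm; one must exploit the finer structure of the divided-difference kernel $1/(\alpha_i-\beta_j)$ to show the associated multiplier is a contraction up to the factor $\delta^{-1}$. This is exactly the Davis--Kahan ingredient, provable through a resolvent/contour representation of the Sylvester solution, and it is where the proof has its real content.

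Finally the outer inequality is routine. Unitarily invariant norms satisfy $|||CRD|||\leq\|C\|\,\|D\|\,|||R|||$ for the operator norm of the outer factors; applying this with $C=E$ and $D=F^\perp$, both orthogonal projections of operator norm at most $1$, gives $|||E(A-B)F^\perp|||\leq|||A-B|||$. Chaining this with the Sylvester bound produces the asserted double inequality. In summary, I expect everything to reduce to bookkeeping with the two commutation relations and with submultiplicativity, \emph{except} the unitarily-invariant-norm Sylvester estimate, which is the substantive obstacle and must be imported as the Davis--Kahan/Bhatia result.
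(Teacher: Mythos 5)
A point of reference first: the paper itself does not prove this statement at all --- it is the classical Davis--Kahan $\sin\Theta$ theorem, quoted in the appendix with a citation, and the appendix explicitly defers all proofs to the cited literature. So your proposal is being compared against a citation, not a proof. Judged on its own, your outline follows the standard textbook route (Davis--Kahan's original argument as organized in Bhatia's \emph{Matrix Analysis}, Ch.~VII): the commutation relations, the reduction of $X=EF^{\perp}$ to the Sylvester equation $A_{1}X-XB_{2}=E(A-B)F^{\perp}$, and the outer inequality via $|||CRD|||\leq\Vert C\Vert\,\Vert D\Vert\,|||R|||$ are all correct.

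The one genuine flaw is in how you state the core lemma. The inequality $|||A_{1}X-XB_{2}|||\geq\delta\,|||X|||$ is \emph{not} valid ``whenever $\mathrm{dist}(\sigma(A_{1}),\sigma(B_{2}))\geq\delta$'': for Hermitian matrices whose spectra merely keep distance $\delta$ but interlace on the real line, the best universal constant in the operator norm is $\pi/2$, not $1$ (Bhatia--Davis--McIntosh; sharpness due to McEachin). The constant $1$ requires exactly the geometry your hypotheses supply --- $\sigma(A_{1})\subseteq[a,b]$ while $\sigma(B_{2})$ avoids $(a-\delta,b+\delta)$ --- so the lemma must be stated with that hypothesis; as written, it is false and no Schur-multiplier argument can rescue it. Moreover, once the interval geometry is invoked, you do not need to import any deep Schur-multiplier or contour-integral machinery, contrary to your closing remark. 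Center the interval: replace $A_{1},B_{2}$ by $A_{1}-cI$, $B_{2}-cI$ with $c=(a+b)/2$, $r=(b-a)/2$, which leaves the Sylvester operator unchanged; then $\Vert A_{1}-cI\Vert\leq r$ and $\Vert(B_{2}-cI)^{-1}\Vert\leq(r+\delta)^{-1}$, and iterating $X=(A_{1}-cI)X(B_{2}-cI)^{-1}-R(B_{2}-cI)^{-1}$ gives $X=-\sum_{k\geq0}(A_{1}-cI)^{k}R(B_{2}-cI)^{-(k+1)}$, whence
\begin{equation}
|||X|||\;\leq\;\sum_{k\geq0}\frac{r^{k}}{(r+\delta)^{k+1}}\,|||R|||\;=\;\frac{1}{\delta}\,|||R|||
\end{equation}
for every unitarily invariant norm, since such norms satisfy $|||CRD|||\leq\Vert C\Vert\,\Vert D\Vert\,|||R|||$. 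With the lemma restated under the interval hypothesis and this elementary Neumann-series proof substituted for the imported ingredient, your outline becomes a complete, self-contained proof.
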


\providecommand{\href}[2]{#2}
\bibliographystyle{elsarticle-num}
\bibliography{bibliografia}

\end{document}